\newtheorem{theorem}{Theorem}
\newtheorem{remark}{Remark}
\begin{document}
%
\title{Adaptive Spatial Modulation for Visible Light Communications with an Arbitrary Number of Transmitters}
\author{Jin-Yuan Wang,~\IEEEmembership{Member,~IEEE,}
        Hong Ge,
        Jian-Xia Zhu,
        Jun-Bo Wang, ~\IEEEmembership{Member,~IEEE,}\\
        Jianxin Dai, ~\IEEEmembership{Member,~IEEE,}
        and Min Lin, ~\IEEEmembership{Member,~IEEE}
\thanks{Jin-Yuan Wang and Min Lin are with Key Lab of Broadband Wireless Communication and Sensor Network Technology, Nanjing University of Posts and Telecommunications, Nanjing 210003, China, and also with National Mobile Communications Research Laboratory, Southeast University, Nanjing 210096, China.}
\thanks{Hong Ge and Jianxin Dai are with School of Science, Nanjing University of Posts and Telecommunications, Nanjing 210003, China.} \thanks{Jian-Xia Zhu and Jun-Bo Wang are with National Mobile Communications Research Laboratory, Southeast University, Nanjing 210096, China.}
\thanks{Manuscript received XX XX, 2018; revised XX XX, 2018.}
}

\markboth{IEEE Access,~Vol.~XX, No.~XX, XX~2018}%
{Shell \MakeLowercase{\textit{et al.}}: Bare Demo of IEEEtran.cls for Journals}

\maketitle

\begin{abstract}
As a power and bandwidth efficient modulation scheme,
the optical spatial modulation (SM) technique has recently drawn increased attention in the field of visible light communications (VLC).
To guarantee the number of bits mapped by the transmitter's index at each timeslot is an integer, the number of transmitters (i.e., light-emitting diodes) in the SM based VLC system is often set be a power of two.
To break the limitation on the required number of transmitters and provide more design flexibility,
this paper investigates the SM based VLC with an arbitrary number of transmitters.
Initially, a channel adaptive bit mapping (CABM) scheme is proposed,
which includes three steps: bit mapping in space domain, bit mapping in signal domain, and the channel adaptive mapping.
The proposed CABM scheme allows operation with an arbitrary number of transmitters,
and is verified to be an efficient scheme through numerical results.
Based on the CABM scheme, the information-theoretical aspects of the SM based VLC are analyzed.
The theoretical expression of the mutual information is first analyzed.
However, it is very hard to evaluate system performance.
To obtain more insights, a lower bound of the mutual information is derived, which is in closed-form.
Both theoretical analysis and numerical results show that the gap between the mutual information and its lower bound is small.
Finally, to further improve the system performance, the precoding scheme is proposed for the SM based VLC.
Numerical results show that the system performance improves dramatically when using the proposed precoding scheme.
\end{abstract}

\begin{IEEEkeywords}
Adaptive spatial modulation,
Arbitrary number of transmitters,
Channel adaptive bit mapping,
Mutual information,
Precoding,
Visible light communications.
\end{IEEEkeywords}

\IEEEpeerreviewmaketitle

\section{Introduction}
\label{section1}
Multi-input multi-output (MIMO) technique can significantly increase the transmission rate and link reliability of the system without occupying additional system bandwidth, and is a promising candidate for future 5G wireless communications \cite{BIB1}.
However, the MIMO technique suffers from inter-channel interference and synchronization problems \cite{BIB1_1}.
Moreover, multiple radio frequency (RF) chains in MIMO systems result in an increase in computational complexity and cost.
Against this background, spatial modulation (SM) technique has been proposed as a solution with low complexity \cite{BIB2}.

For SM based communication system, multiple antennas are deployed at the transmitter,
but only one of them is active at each timeslot to transmit information and the others are silent.
The source data bits are divided into two parts,
one part is mapped onto the conventional constellation points in signal domain,
and the other part is used to determine the index of the active antenna in space domain \cite{BIB03}.
Therefore, the index of the antenna is utilized as an additional dimension to transfer information,
which helps to improve the data rate.
Moreover, only one RF chain is employed in SM based system,
which can effectively reduce the system complexity.

Recently, the SM technique has been extended to the optical wireless communication (OWC) field for both the outdoor and indoor
scenarios.
For outdoor OWC (also named as free-space optical (FSO) communications), the performance for the SM based system has been investigated.
In \cite{BIB3}, an analytical framework was provided for both uncoded and coded outdoor SM in FSO communication channels.
In \cite{BIB4}, the average bit error probability (ABEP) was analyzed for SM based FSO system over H-K turbulence channels.
In \cite{BIB4_1}, the constrained capacity was maximized for power-imbalanced optimal SM MIMO system.
For the indoor environment, the OWC can be applied to the visible light communications (VLC).
The optical SM in VLC was first proposed in \cite{BIB04} and then extended to many scenarios.
In \cite{BIB05}, the ABEP of SM combining pulse amplitude modulation (PAM) and space shift keying (SSK) in VLC was investigated.
In \cite{BIB06}, the mutual information of the SM based VLC system was derived.
Moreover, ref. \cite{BIB07} proposed an enhancement in SM performance by aligning the light-emitting diodes (LEDs) and the photodiodes (PDs).
In \cite{BIB08}, the performance of SM was compared with that of optical spatial multiplexing and optical repetition coding.
The effect of synchronization error on optical SM was investigated in \cite{BIB09}.
In \cite{BIB10}, an active-space, collaborative constellation-based generalized SM MIMO encoding scheme was proposed.
In \cite{BIB10_0}, a constellation optimization scheme was proposed for indoor SM based VLC.
For a complete discussion about the concept of SM and its variants,
the readers can refer to \cite{BIB11}.
It should be emphasized that the design flexibility in \cite{BIB3,BIB4,BIB4_1,BIB04,BIB05,BIB06,BIB07,BIB08,BIB09,BIB10,BIB10_0,BIB11} is limited by the fact that the number of LEDs must be a power of two to guarantee that the number of bits mapped by the transmit LED index at each timeslot is an integer.
That is, when the number of LEDs is not a power of two, the design methods in \cite{BIB3,BIB4,BIB4_1,BIB04,BIB05,BIB06,BIB07,BIB08,BIB09,BIB10,BIB10_0,BIB11} are not available.

In conventional RF wireless communications, some bit mapping schemes have been proposed when the number of transmit antennas is not a power of two.
A fractional bit encoded spatial modulation (FBE-SM) was proposed in \cite{BIBadd1}.
Unfortunately, the FBE-SM scheme suffers from the error propagation problem.
In \cite{BIBadd2}, a joint mapped spatial modulation scheme was proposed.
However, the number of transmit antennas must satisfy an equality constraint.
In addition, the bit mapping schemes with an arbitrary number of transmit antennas were also analyzed in \cite{BIB14} and \cite{BIB15}.
However, the channel state information at the transmitter (CSIT) was not considered in \cite{BIBadd1,BIBadd2,BIB14,BIB15}.
By considering the CSIT, link adaptive mapper designs were proposed in \cite{BIBadd3} and \cite{BIBadd4}.
In \cite{BIBadd3}, the CSIT was employed to design the SSK modulated system.
Note that the SSK modulation is a special case of SM, where only antenna index is used to convey information.
Therefore, the bit mapping in signal domain was not considered in \cite{BIBadd3}.
In \cite{BIBadd4}, an adaptive brute forth mapper (BFM) was proposed for SM with lightweight feedback overhead.
However, the number of transmit antennas must be a power of two.

Motivated by the above literature,
this paper investigates the channel adaptive SM for indoor VLC to break the limitation on the required number of LEDs. That is, the CSIT is utilized for bit mapping, and the number of LEDs can be an arbitrary positive number larger than one.
The main contributions of this paper are listed as follows:
\begin{itemize}
  \item With an arbitrary number of LEDs, a channel adaptive bit mapping (CABM) scheme for SM based VLC is proposed.
        The proposed CABM scheme is a modification of the schemes in \cite{BIB14} and \cite{BIB15} by using the CSIT, which includes three steps. The first two steps are the bit mappings in space domain and signal domain, respectively. At last, by using CSIT, a constellation optimization is formulated to obtain better modulation combinations. In the CABM scheme, when the number of LEDs is not a power of two, space domain symbols are mapped with different numbers of bits. To keep the same number of transmit bits at each timeslot, the constellations in signal domain are also mapped with different modulation orders.
  \item After performing the proposed CABM scheme, the mutual information is analyzed for SM based VLC with an arbitrary number of LEDs.
        In the system, the finite alphabet and input-dependent noise are employed. According to Shannon's information theory, the theoretical expression of the mutual information is derived, but is with an integral expression and cannot be easily used in practice. To reduce the complexity, a closed-form expression of the lower bound on the mutual information is obtained.
  \item To further improve the system performance, a precoding scheme is proposed for the SM based VLC.
        The purpose of the precoding is to maximize the minimum distance between any two constellation points in the received signal space.
        The optimization problem is non-convex and non-differentiable, which is very hard to obtain the optimal solution. Alternatively, an approximation is employed for the minimum function, and the optimization problem is then solved by using the interior point algorithm.
\end{itemize}

The rest of the paper is organized as follows.
In Section \ref{section2}, the system model is described.
In Section \ref{section3}, the CABM scheme is proposed.
Section \ref{section4} analyzes the mutual information and its lower bound.
A precoding scheme is proposed in Section \ref{section5}.
Numerical results are shown in Section \ref{section6}.
Section \ref{section7} draws conclusions of this paper.

\emph{Notations}:
Throughout this paper, italicized symbols denote scalar values,
and bold symbols denote vectors/matrices/sets.
We use ${\mathbb{N}^ + }$ for positive integer,
$\left\langle  \cdot  \right\rangle $ for the inner product,
$\mathbb{E(\cdot)}$ for the expect operator,
${\left\|  \cdot  \right\|_{\rm{F}}}$ for the Frobenius norm,
${\rm Pr }(\cdot)$ for the probability of an event.
We use ${\cal N}\left( {0,{\sigma ^2}} \right)$ for a Gaussian distribution with zero mean and variance ${\sigma ^2}$,
${\# _\mathbf{\Delta} }\left( a \right)$ for the number of $a$ in set $\mathbf{\Delta}$,
and $p( \cdot )$ for the probability density function (PDF) of a random variable.
We use ${\log _2}( \cdot )$ for the logarithm with base 2,
and $\ln ( \cdot )$ for the natural logarithm.
We use ${\cal I}(\cdot;\cdot)$ for the mutual information,
${\cal H}(\cdot)$ for the entropy, and $\mathcal{Q}(\cdot)$ for the Gaussian-Q function.

\section{System Model}
\label{section2}
Consider an SM based indoor VLC system with $M$ transmitters (i.e., LEDs) and one receiver (i.e., PD),
as illustrated in Fig. \ref{fig1}.
In the system, $M$ is an arbitrary positive number larger than one.
With finite alphabet ($K$ bits), the source data is divided into two parts by using optical SM.
One part is mapped onto the indexes of the LEDs in space domain,
and the other part is mapped onto the constellation points in signal domain.

\begin{figure}
\centering
\includegraphics[width=9cm]{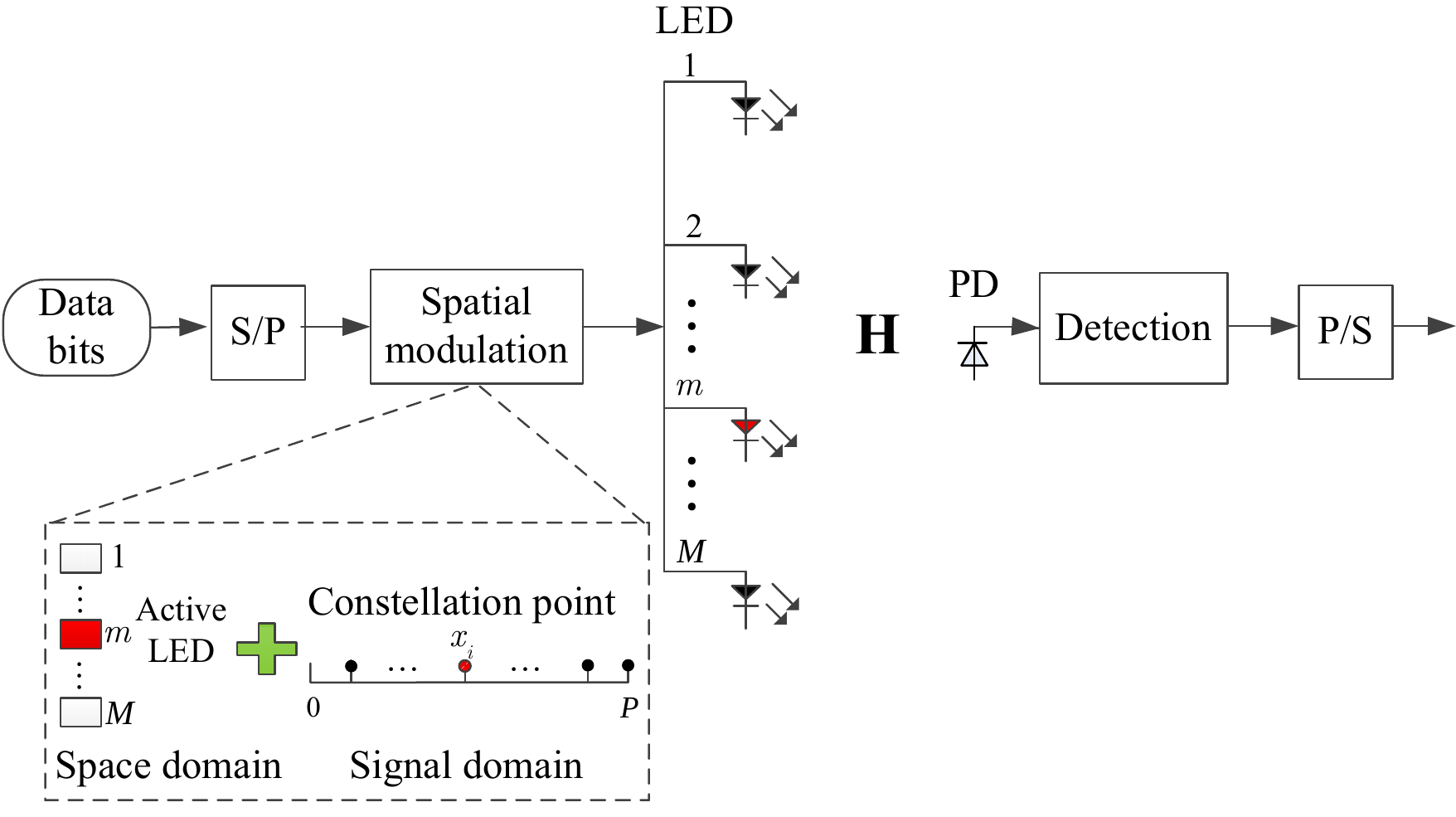}
\caption{An SM based VLC system.}
\label{fig1}
\end{figure}

By employing SM, only one LED is active at each timeslot.
We assume that the $m$-th LED is selected at the current timeslot to transmit the $i$-th symbol.
According to \cite{BIB12}, the received signal at the PD is given by
\begin{equation}
y = {h_m}{x_i} + \sqrt {{h_m}{x_i}} {z_1} + {z_0},
\label{eq1}
\end{equation}
where ${x_i}$ denotes the $i$-th transmitted optical intensity signal in signal domain.
The average power of $x_i$ is denoted as $P_t$.
$z_0$ and $z_1$ are independent of each other, where ${z_0} \sim {\cal N}\left( {0,{\sigma ^2}} \right)$ is the input-independent Gaussian noise,
and ${z_1} \sim {\cal N}\left( {0,{\varsigma ^2}{\sigma ^2}} \right)$ is the input-dependent Gaussian noise.
${\varsigma ^2} > 0$ denotes the ratio of the input-dependent noise variance to the input-independent noise variance.
Moreover, ${h_m} \in {\bf{H}} = \{ {h_1},{h_2}, \cdots ,{h_M}\} $ represents the real-valued channel gain between the $m$-th LED and the PD, which is given by \cite{BIB12_1,BIB13}
\begin{equation}
{h_m} \!\!=\!\! \left\{\!\!\!\!\! {\begin{array}{*{20}{c}}
{\frac{{(l + 1)E}}{{2\pi d_m^2}}{{\cos }^l}({\phi _m})\cos ({\varphi _m}),}& {\rm if}\; {0 \le {\varphi _m} \le {\Psi _{\rm{c}}}}\\
{0,}&{\rm if}\; {{\varphi _m} > {\Psi _{\rm{c}}}}
\end{array}} \right.,
\label{eq2}
\end{equation}
where $E$ is the physical area of the PD.
$l =  - \ln 2/\ln \left( {\cos {\Phi _{1/2}}} \right)$ is the Lambertian emission order, and ${\Phi _{1/2}}$ is the semi-angle at half-power of the LED.
${d_m}$ is the distance between the $m$-th LED and the PD.
${\phi _m}$ and ${\varphi _m}$ are the emission angle and the incidence angle from the $m$-th LED to the PD, respectively.
${\Psi _c}$ denotes the field of view (FOV) of the PD.
By using SM, only one element in ${\bf{H}}$ is non-zero at each timeslot.

\section{Channel Adaptive Bit Mapping}
\label{section3}
In this paper, the number of LEDs $M$ is arbitrary.
When $M = {2^p},\;p \in {\mathbb{N}^ + }$,
$p$ bits can be mapped onto the LEDs' indexes at each timeslot using the traditional bit mapping scheme \cite{BIB13_1}. However, when $M \ne {2^p},\;p \in {\mathbb{N}^ + }$, the traditional bit mapping scheme is not available.
In this section, a CABM scheme is provided for bitting mapping.

The principle of the CABM scheme is presented in Fig. \ref{fig2}.
The CABM scheme includes three steps.
The first two steps are the bit mapping in space domain and the bit mapping in signal domain, respectively.
Because the number of the LEDs is not a power of two,
the LEDs' indexes in space domain are mapped with different numbers of bits.
To keep the same total bit length at each timeslot (i.e., $K$ bits),
different modulation modes are employed in signal domain.
Assuming that the CSIT is known,
the last step is to adaptively select the best modulation order combination in signal domain to get the optimal bit mapping.
To facilitate the understanding, the detailed descriptions about the CABM scheme are given in the following three subsections.

\begin{figure}
\centering
\includegraphics[width=8cm]{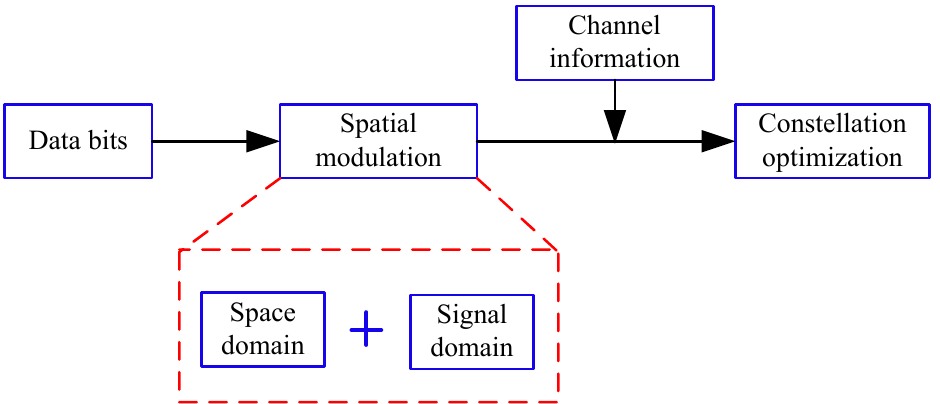}
\caption{The channel adaptive bit mapping scheme.}
\label{fig2}
\end{figure}

\subsection{Bit Mapping in Space Domain}
\label{section3_1}
When the number of LEDs $M$ is not a power of two,
an integer $p$ can be found such that ${2^p} < M < {2^{p + 1}}$.
After that, prescribe some information bits transmitted by LEDs with a length of $p$ bits and the others with a length of $p + 1$ bits.
Referring to \cite{BIB14} and \cite{BIB15}, the bit mapping in space domain can be implemented by using the following procedures.
\begin{enumerate}
  \item Choose $p \in {\mathbb{N}^ + }$ such that ${2^p} < M < {2^{p + 1}}$.
  \item Define $\mathbf{\Gamma} = \{ 1,2,...,M\}$ be the set of the LEDs' indexes,
        and select the former ${2^p}$ elements from $\mathbf{\Gamma}$ into set $\mathbf{\Omega}$.
        Therefore, $\mathbf{\Omega} =\{1,2,\cdots, 2^p\}$.
  \item Divide set $\mathbf{\Omega}$ into two sets $\mathbf{\Psi}$ and $\mathbf{\Xi}$,
        where $\mathbf{\Xi}  = \left\{ {1,2, \cdots ,{2^{p + 1}} - M} \right\}$ contains the former ${2^{p + 1}} - M$ elements
        and $\mathbf{\Psi}  = \left\{ {{2^{p + 1}} - M + 1, \cdots ,{2^p}} \right\}$ contains the remainder $M - {2^p}$ elements.
      Then, $p$ bits are used to map these ${2^p}$ elements.
  \item Define set $\mathbf{\Phi}  = \{{2^p} + 1,{2^p} + 2,...,M\}$ containing the unsettled $M - {2^p}$ LEDs. The bit mapping in set $\mathbf{\Phi}$ is initially the same as that in set $\mathbf{\Psi}$.
  \item ``0" is appended at the end of each bit mapping for set $\mathbf{\Psi}$, and ``1" is appended at the end of each bit mapping for set $\mathbf{\Phi}$. So far,  $p + 1$ bits are mapped in sets $\mathbf{\Psi}  \cup \mathbf{\Phi}$ and $p$ bits are mapped in set $\mathbf{\Xi}$.
\end{enumerate}

\subsection{Bit Mapping in Singal Domain}
\label{section3_2}
After performing the bit mapping in space domain, the bit lengths mapped on each LED are not the same.
To ensure the total number of bits transmitted at each timeslot remains the same (i.e., $K$ bits),
different modulation modes should be employed in signal domain.
Without loss of generality, the PAM is employed for the VLC system.
Therefore, the signals with higher-order (${2^q}$-ary) PAM are transmitted from the LEDs pertaining to $p$ bits
and those with lower-order (${2^{q - 1}}$-ary) PAM are transmitted from the LEDs pertaining to $p+1$ bits.
Briefly, the bit mapping in signal domain is performed by using the following procedures.
\begin{enumerate}
  \item Choose $q=K-p$.
  \item When the LEDs with the index in sets $\mathbf{\Psi}$ and $\mathbf{\Phi}$ are activated, ${2^{q - 1}}$-ary PAM is employed to transmit information.
  \item When the LEDs with the index in set $\mathbf{\Xi}$ are activated, ${2^q}$-ary PAM is employed to transmit information.
\end{enumerate}

\subsection{Channel Adaptive Mapping}
\label{section3_3}
In \cite{BIB14} and \cite{BIB15}, the CSIT is not considered for bit mapping.
In this case, LED with bad channel state may employ high-order signal modulation, which will degrade system performance.
To improve system performance, the channel adaptive mapping is employed.

In VLC, typical indoor illumination environments offer very high signal-to-noise ratio (SNR) \cite{BIB16}.
For the channel in (\ref{eq1}), when the $m$-th LED is activated to transmit the symbol $x_i$,
the conditional pairwise error probability (PEP) at high SNR is given by \cite{BIB17}
\begin{eqnarray}
&&\!\!\!\!\!\!\!\!\!\!\!\!\!\!\!\! PEP\left( {{x_i} \to {x_j}|{h_m}} \right) \nonumber\\
&&\!\!\!\!\!\!\!\!\!\!\!\!\!\!\!\!= {\rm Pr}\left\{ {\left( {\sqrt {{h_m}{x_i}} {z_1} + {z_0}} \right) \cdot ({h_m}{x_i} - {h_m}{x_j}) > d_{ij}^2/2} \right\},
 \label{eq3}
\end{eqnarray}
where ${d_{ij}} = {\left\| {{h_m}{x_i} - {h_m}{x_j}} \right\|_{\rm{F}}}$. Note that $z \buildrel \Delta \over = \sqrt {{h_m}{x_i}} {z_1} + {z_0} \sim \mathcal{N}\left( {0,\left( {{h_m}{x_i}{\varsigma ^2} + 1} \right){\sigma ^2}} \right)$ for a given ${h_m}$, and thus $z \cdot \left( {{h_m}{x_i} - {h_m}{x_j}} \right) \sim \mathcal{N}\left( {0,({h_m}{x_i}{\varsigma ^2} + 1){\sigma ^2}d_{i,j}^2} \right)$. Therefore, the conditional PEP is given by
\begin{equation}
PEP\left( {{x_i} \to {x_j}|{h_m}} \right) = {\cal Q}\left( {\frac{{{d_{ij}}}}{{2\sqrt {1 + {h_m}{x_i}{\varsigma ^2}} \sigma }}} \right).
\label{eq4}
\end{equation}

\begin{remark}
It can be observed from (\ref{eq4}) that the conditional PEP $PEP\left( {{x_i} \to {x_j}|{h_m}} \right)$ is a monotonically decreasing function with respect to $h_m$.
Moreover, it can be seen from (\ref{eq2}) that $h_m$ is a monotonicity decreasing function with respect to the angle of emission ${\phi _m}$.
This indicates that the system error performance degrades with the increase of ${\phi _m}$.
\label{rem0}
\end{remark}

To verify \emph{Remark \ref{rem0}}, the BER performance in the receive plane of the PD is provided in Fig. \ref{addfig1} when $P_t=25$ dBm, $M=1$, $K=5$ and the coordinates of the LED and the PD are (2.5m, 2m, 3m) and ($x$m, $y$m, 0.8m). It can be found that when the PD is located at (2.5m, 2m, 0.8m), the angle of emission ${\phi _m}$ is zero, the best BER performance achieves.
When the PD moves far away from (2.5m, 2m, 0.8m), the angel of emission ${\phi _m}$ increase,
and the value of the BER increases, which verifies the conclusion in \emph{Remark \ref{rem0}}.

\begin{figure}
\centering
\includegraphics[width=8cm]{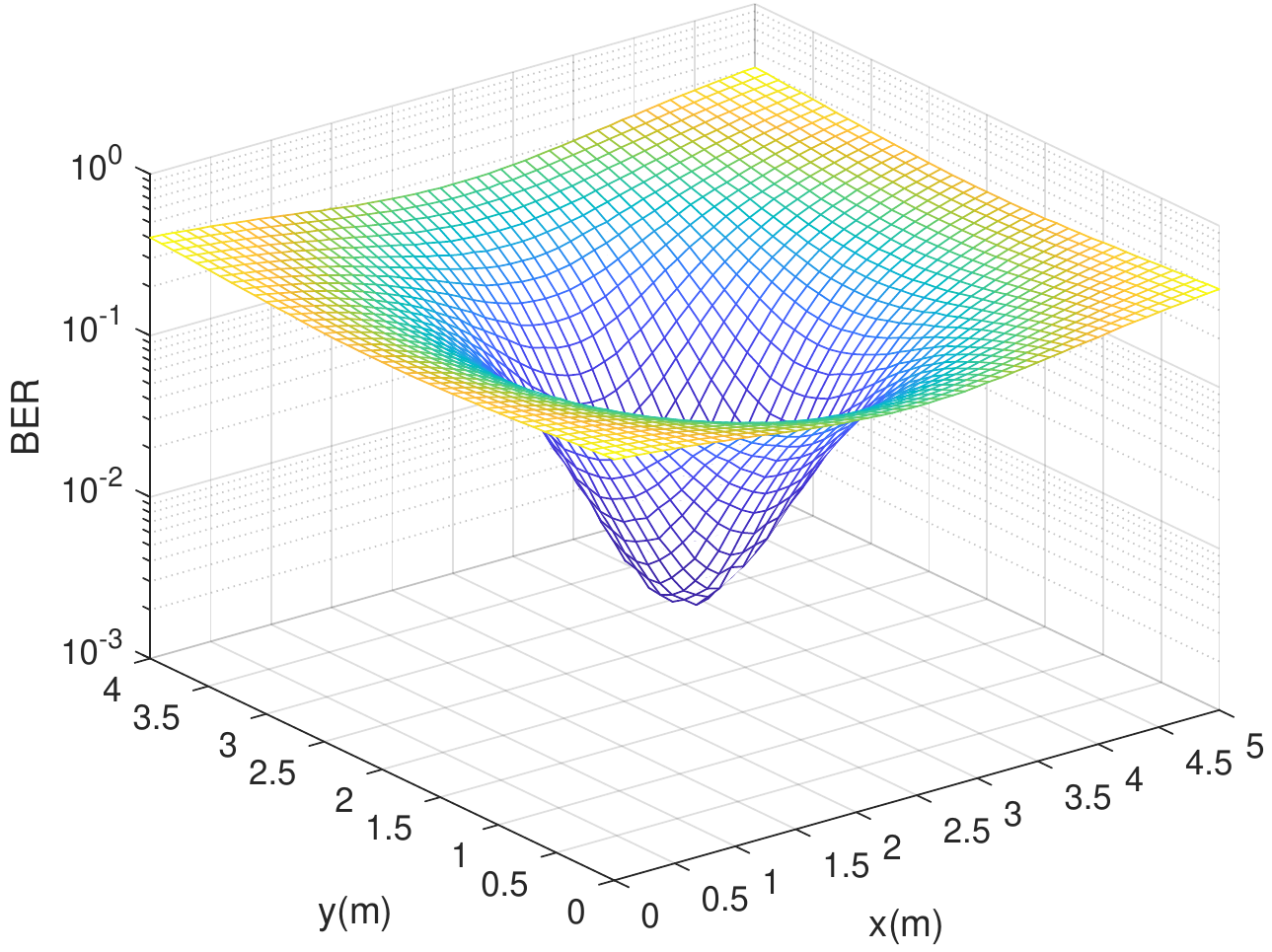}
\caption{BER performance at the receiver plane.}
\label{addfig1}
\end{figure}

Moreover, it can be observed from (\ref{eq4}) that the system error performance is dominated by the term ${d'_{i,j}} = {{{d_{ij}}} \mathord{\left/
 {\vphantom {{{d_{ij}}} {\sqrt {1 + {h_m}{x_i}{\varsigma ^2}} }}} \right.
 \kern-\nulldelimiterspace} {\sqrt {1 + {h_m}{x_i}{\varsigma ^2}} }}$ among constellation points.
The minimum value of ${d'_{i,j}}$ is given by
\begin{equation}
{d'_{\min }}({\bf{H}}) = \mathop {\min }\limits_{\scriptstyle{x_i},{x_j} \in \mathbf{\Lambda} \hfill\atop
\scriptstyle{x_i} \ne {x_j}\hfill} \frac{{{{\left\| {{\bf{H}}({x_i} - {x_j})} \right\|}_{\rm{F}}}}}{{\sqrt {1 + {h_m}{x_i}{\varsigma ^2}} }},
\label{eq5}
\end{equation}
where ${\bf{\Lambda }}$ denotes the set of all possible transmit signals.
To improve the system performance, the minimum term ${d'_{\min }}({\bf{H}})$ should be maximized.
Consequently, the modulation order optimization problem can be formulated as
\begin{equation}
  \begin{split}
    & [{{\tilde N}_1},{{\tilde N}_2}, \ldots ,{{\tilde N}_M}] = \arg \mathop { {\rm{  max}}}\limits_{{N_m},m \in \bf{\Gamma}} {\rm{  }}{d'_{\min }}\left( {\bf{H}} \right) \\
    \text{s.t.}\qquad & \left\{ {\begin{array}{*{20}{c}}
{{N_m} \in \mathbf{\Delta}  = \left\{ {{2^{q - 1}},{2^q}} \right\},\;\;m \in \bf{\Gamma} }\\
{{\# _\mathbf{\Delta} }\left( {{2^{q - 1}}} \right) = 2\left( {M - {2^p}} \right)}\\
{{\# _\mathbf{\Delta} }\left( {{2^q}} \right) = {\rm{ }}{2^{p + 1}} - M}
\end{array}} \right.,
  \end{split}
\label{eq6}
\end{equation}
where ${\tilde N_1},{\tilde N_2}, \cdots ,{\tilde N_M}$ are the optimal modulation orders in signal domain on all LEDs.
${\# _\mathbf{\Delta} }\left( {{2^{q - 1}}} \right)$ and ${\# _\mathbf{\Delta} }\left( {{2^q}} \right)$ denotes the total numbers of ${2^{q - 1}}$ and ${2^q}$ in set $\mathbf{\Delta}$, respectively.

In the optimization problem (\ref{eq6}), the corresponding modulation orders are the optimal solution.
However, the problem (\ref{eq6}) is an integer optimization problem,
which is non-convex and very hard to obtain the optimal solution.
From the viewpoint of implementation, computationally efficient algorithms are
more preferred. Here, Algorithm 1 is proposed to solve problem (\ref{eq6}), which is shown in Fig. \ref{fig3}.

\begin{figure}[!h]
\hrulefill\\
\small
\textbf{Algorithm 1} (Modulation order optimization algorithm) \vspace*{-5pt}\\
\vspace*{0pt}\hrulefill \\
\textbf{Step 1):} Given the positions of LEDs and the PD, calculate the channel gains by using (\ref{eq2}).\\
\textbf{Step 2):} Find all possible modulation order combinations ${\bf{D}} = \left\{ {{\mathbf{\Pi} _1},{\mathbf{\Pi} _2},...,{\mathbf{\Pi} _L}} \right\}$, where $L$ is the total number of possible combinations, ${\mathbf{\Pi} _i} = \left[ {N_1^i,N_2^i, \cdots ,N_M^i} \right]$ denotes the $i$-th modulation order combination, and   $N_m^i$ denotes the modulation order of the $m$-th LED in the $i$-th modulation order combination.\\
\textbf{Step 3):} Compute ${d'_{\min }}({\bf{H}})$ by using (\ref{eq5}) for all combinations in ${\bf{D}}$.\\
\textbf{Step 4):} Select the combination with the maximum ${d'_{\min }}({\bf{H}})$ as the output ${\mathbf{\Pi} _{{\rm{optimal}}}}$. \\
\vspace*{-5pt} \hrulefill
\caption{Modulation order optimization algorithm.}
\label{fig3}
\end{figure}

By using Algorithm 1,
the best modulation order combination ${\mathbf{\Pi }_{{\rm{optimal}}}}$ is obtained, and the corresponding set of the LEDs' indexes is $\mathbf{\Gamma}=\{1,2,\ldots,M\}$.
Sorting the elements in ${\mathbf{\Pi} _{{\rm{optimal}}}}$ by using decreasing order,
and the corresponding new set of the LEDs's indexes becomes $\mathbf{\Gamma'}$.
Then, replace $\mathbf{\Gamma}$ with $\mathbf{\Gamma'}$ in Step 2) of Section \ref{section3_1},
and perform the bit mapping in the space and signal domains again, the performance of the bit mapping can be improved.

\subsection{Case Study}
To facilitate the understanding,
an example is provided in Fig. \ref{fig4} to compare the proposed CABM scheme with previous schemes in \cite{BIB14} and \cite{BIB15}.
In Fig. \ref{fig4}, the number of LEDs is set to be $M = 6$.
Assume that the indexes of the LEDs are 1, 2, 3, 4, 5 and 6,
and the corresponding channel gains are 0.08, 0.15, 0.13, 0.25, 0.01, 0.22, respectively.
For Fig. \ref{fig4}(a), the bit mappings in space and signal domains are performed according to Sections \ref{section3_1} and \ref{section3_2}.
In the figure, ``**" represents the 2 bits mapped by 4-ary PAM modulation, which can be 00, 01, 10 or 11,
and ``$\times$" stands for the single bit mapped by 2-ary PAM, which can be 0 or 1.
However, if the CSIT is considered as in Fig. \ref{fig4}(b),
set $\mathbf{\Gamma}$ is first changed to $\mathbf{\Gamma'}$, and then
Sections \ref{section3_1} and \ref{section3_2} are performed.
Because LED 4 and LED 6 have good channel states, 4-PAM is employed in signal domain.
The other LEDs with not so good channel states are using 2-PAM.
To show the improvement of system performance, some results will be shown in Section \ref{section6_1}.

\begin{figure}
\centering
\includegraphics[width=9cm]{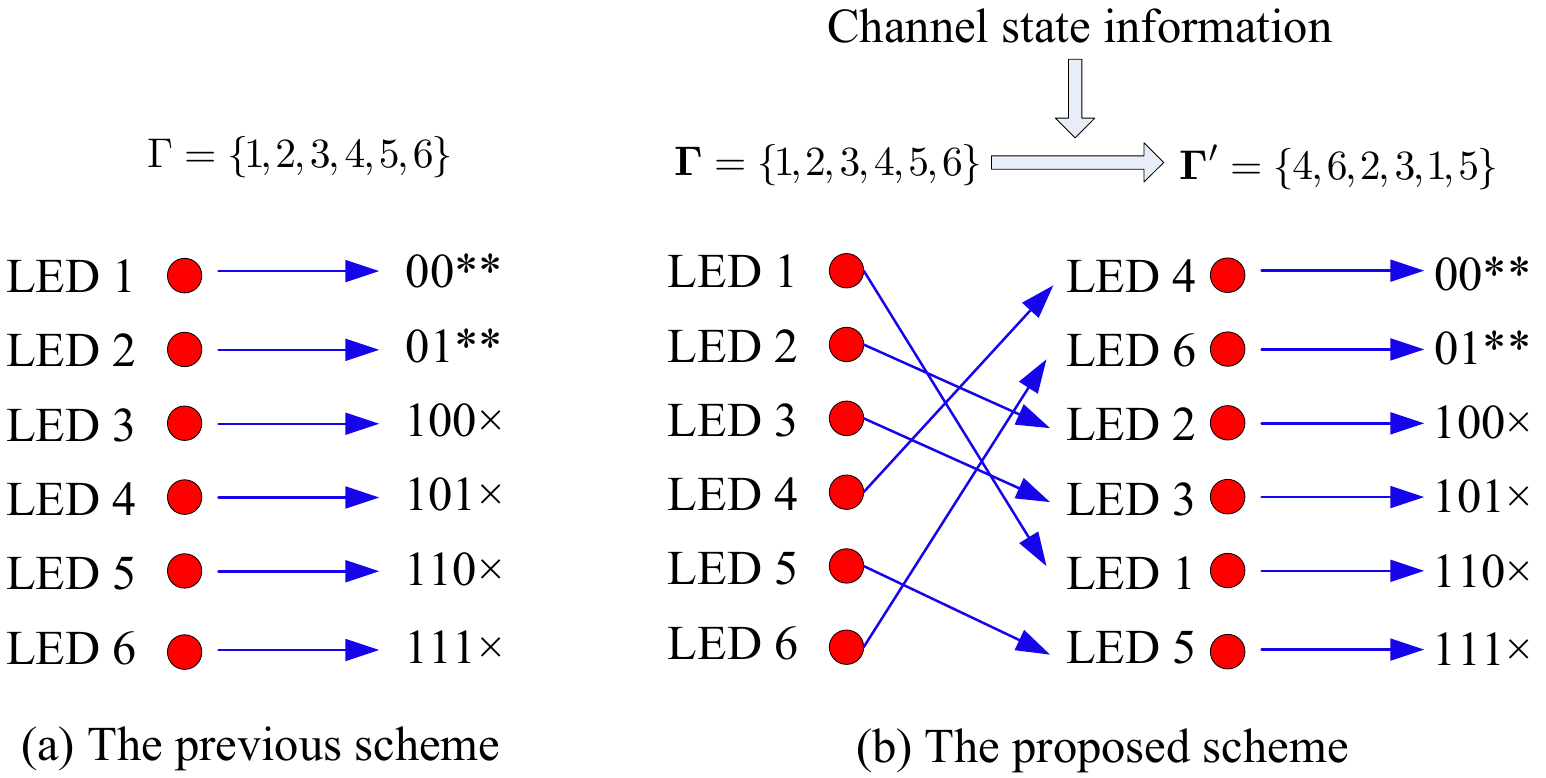}
\caption{Comparison between the previous scheme and the proposed CABM scheme when $M=6$.}
\label{fig4}
\end{figure}

\section{Mutual Information Analysis}
\label{section4}
In the above section, the CABM scheme is proposed.
Based on the CABM scheme,
the theoretical expression of the mutual information and its lower bound will be derived.
In addition, some remarks and insights are also provided.

According to the CABM scheme and the system model, the system diagram can be illustrated as Fig. \ref{fig5}.
In Fig. \ref{fig5}, the total number of the source data bits is $K=p+q$ bits.
The source data is divided into two groups by using SM.
One group is with $p$ (or $p+1$) bits to selective the LED (i.e., select the channels) in space domain,
while the other one is with $q$ (or $q-1$) bits to select the constellation point in signal domain.
Consequently, the total number of bits transmitted at any timeslot is a constant.

\begin{figure}
\centering
\includegraphics[width=8.5cm]{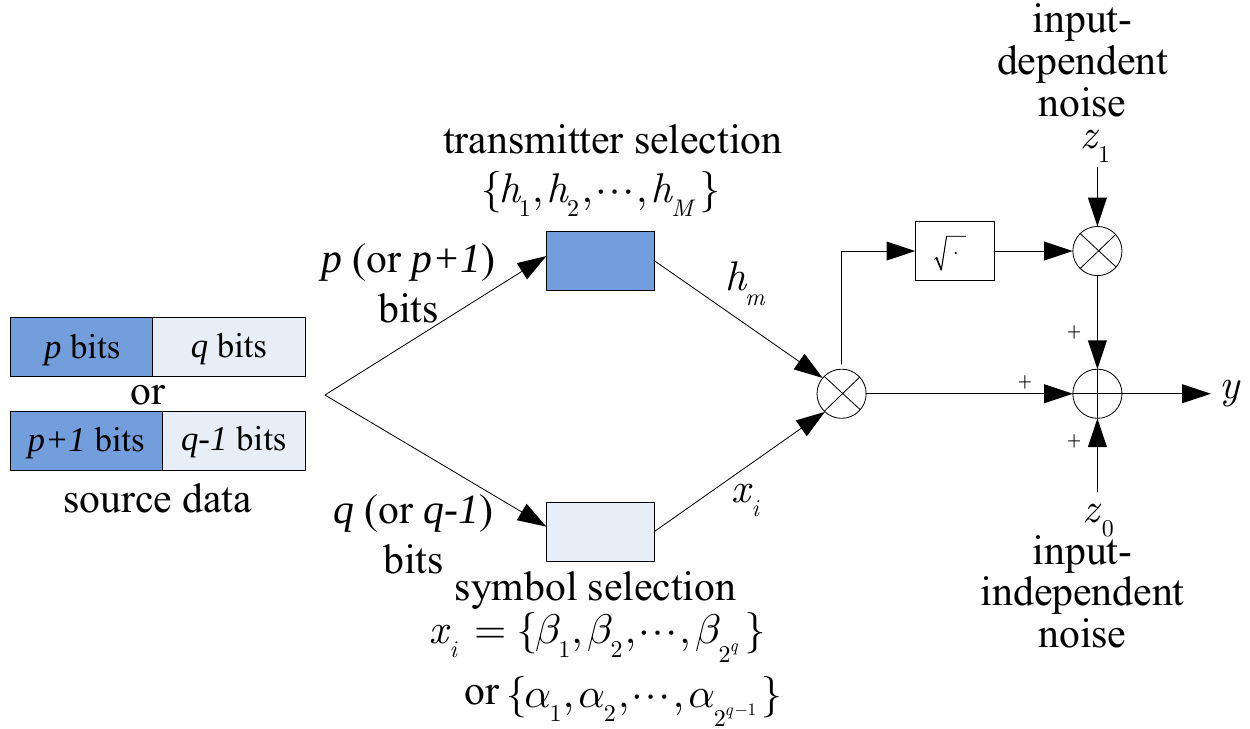}
\caption{Principle of bit mapping for SM based VLC.}
\label{fig5}
\end{figure}

Assume that the $m$-th LED is selected to transmit signal, and then $p\left( {h = {h_m}} \right)$ can be expressed as
\begin{equation}
p\left( {h = {h_m}} \right) = \left\{ \begin{array}{l}
 \frac{1}{{{2^{p + 1}}}},\;{\rm{if}}\;m \in \mathbf{\Psi}  \cup \mathbf{\Phi} \\
 \frac{1}{{{2^p}}},\;\;\;{\rm{if}}\;m \in \mathbf{\Xi}
\end{array} \right.,
\label{eq7}
\end{equation}
where the total number of elements in sets $\mathbf{\Psi}  \cup \mathbf{\Phi} $ is $2(M - {2^p})$, and the number of elements in set $\mathbf{\Xi} $ is ${2^{p + 1}} - M$.

When $m \in \mathbf{\Psi}  \cup \mathbf{\Phi} $, the space domain includes $p + 1$ bits,
and the signal domain must include $q - 1$ bits to guarantee the total number of transmit bits invariant at each timeslot.
In this case, the symbol set in signal domain is denoted as ${\bf{A}} = \left\{ {{\alpha _1},{\alpha _2}, \cdots ,{\alpha _{{2^{q - 1}}}}} \right\}$. Similarly, when $m \in \mathbf{\Xi}$, the space domain uses $p$ bits, and the signal domain must use $q$ bits. On this occasion, the symbol set in signal domain is denoted as ${\bf{B}} = \left\{ {{\beta _1},{\beta _2}, \cdots ,{\beta _{{2^q}}}} \right\}$.
Therefore, we have
\begin{equation}
p(x = {\alpha _i}|h = {h_m}) = \left\{ \begin{array}{l}
\frac{1}{{{2^{q - 1}}}},\;\;\;{\rm{if}}\;m \in \mathbf{\Psi}  \cup \mathbf{\Phi} \\
0,\;\;\;\;\;\;\;{\rm{if}}\;m \in \mathbf{\Xi}
\end{array} \right.,
\label{eq8}
\end{equation}
and
\begin{equation}
p(x = {\beta _j}|h = {h_m}) = \left\{ \begin{array}{l}
\frac{1}{{{2^q}}},\;\;{\rm{if}}\;m \in \mathbf{\Xi} \\
0,\;\;\;\;{\rm{if}}\;m \in \mathbf{\Psi}  \cup \mathbf{\Phi}
\end{array} \right..
\label{eq9}
\end{equation}
Moreover, $p(x = {\alpha _i})$ and $p(x = {\beta _j})$ can be written, respectively, as
\begin{eqnarray}
p(x = {\alpha _i}) 
 =  \frac{{M - {2^p}}}{{{2^{p + q - 1}}}},
 \label{eq10}
\end{eqnarray}
and
\begin{eqnarray}
p(x = {\beta _j})  
 =  \frac{{{2^{p + 1}} - M}}{{{2^{p + q}}}}.
\label{eq11}
\end{eqnarray}
Then, a unified expression for (\ref{eq10}) and (\ref{eq11}) can be written as
\begin{eqnarray}
p(x = {x_i}) = \left\{ \begin{array}{l}
\displaystyle {\frac{{M - {2^p}}}{{{2^{p + q - 1}}}},\;\;\;\;{x_i} \in {\mathbf{A}}}\\
\displaystyle {\frac{{{2^{p + 1}} - M}}{{{2^{p + q}}}},\;{x_i} \in {\mathbf{B}}}
\end{array} \right..
\label{eq12}
\end{eqnarray}
According to (\ref{eq1}), if $m \in \mathbf{\Psi}  \cup \mathbf{\Phi},\;{x_i} \in {\bf{A}}$ (or $m \in \mathbf{\Xi} ,\;{x_i} \in {\bf{B}}$), the following conditional PDF is derived by
\begin{eqnarray}
p\left( {\left. y \right|h = {h_m},x = {x_i}} \right) = \frac{{\exp \left[ { - \frac{{{{(y - {h_m}{x_i})}^2}}}{{2(1 + {h_m}{x_i}{\varsigma ^2}){\sigma ^2}}}} \right]}}{{\sqrt {2\pi (1 + {h_m}{x_i}{\varsigma ^2}){\sigma ^2}} }}.
\label{eq13}
\end{eqnarray}
Furthermore, we have
\begin{eqnarray}
p\left( {y\left| {x \!=\! {x_i}} \right.} \right) \!\!=\!\!\! \left\{ \begin{array}{l}\!\!\!\!
\frac{1}{{{2^{p + 1}}}}\!\!\! \sum\limits_{m \in \mathbf{\Psi}  \cup \mathbf{\Phi} }\!\!\! {\frac{{\exp \!\left[\!\! { - \frac{{{{(y \!-\! {h_m}{x_i})}^2}}}{{2(1 \!+\! {h_m}{x_i}{\varsigma ^2}){\sigma ^2}}}}\!\! \right]}}{{\sqrt {2\pi (1 + {h_m}{x_i}{\varsigma ^2}){\sigma ^2}} }}} ,{\rm{if}}\;{x_i} \!\in\! {\bf{A}}\\
\!\!\!\! \frac{1}{{{2^p}}}\!\! \sum\limits_{m \in \mathbf{\Xi} }\!\! {\frac{{\exp \left[\!\! { - \frac{{{{(y \!-\! {h_m}{x_i})}^2}}}{{2(1 \!+\! {h_m}{x_i}{\varsigma ^2}){\sigma ^2}}}}\!\! \right]}}{{\sqrt {2\pi (1 + {h_m}{x_i}{\varsigma ^2}){\sigma ^2}} }},\;\;\;\;{\rm{if}}\;{x_i} \!\in\! {\bf{B}}}
\end{array} \right.\!\!.
\label{eq14}
\end{eqnarray}
According to (\ref{eq12}) and (\ref{eq14}), the output PDF is given by
\begin{eqnarray}
p(y)\!\!\!\! &=&\!\!\!\! \frac{{M \!-\! {2^p}}}{{{2^{2p + q}}}}\!\!\sum\limits_{{x_i} \in {\bf{A}}} {\sum\limits_{m \in \mathbf{\Psi}  \cup \mathbf{\Phi} } {\frac{{\exp\! \left[{ - \frac{{{{\left( {y - {h_m}{x_i}} \right)}^2}}}{{2\left( {1 + {h_m}{x_i}{\varsigma ^2}} \right){\sigma ^2}}}} \!\right]}}{{\sqrt {2\pi ( {1 \!+\! {h_m}{x_i}{\varsigma ^2}} ){\sigma ^2}} }}} } \nonumber\\
 &+&\!\!\!\!  \frac{{{2^{p + 1}} \!-\! M}}{{{2^{2p + q}}}}\!\!\sum\limits_{{x_i} \in {\bf{B}}} {\sum\limits_{m \in \mathbf{\Xi} } {\frac{{\exp \!\left[ { - \frac{{{{\left( {y - {h_m}{x_i}} \right)}^2}}}{{2\left( {1 + {h_m}{x_i}{\varsigma ^2}} \right){\sigma ^2}}}} \right]}}{{\sqrt {2\pi ( {1 \!+\! {h_m}{x_i}{\varsigma ^2}} ){\sigma ^2}} }}} }.
 \label{eq15}
\end{eqnarray}

\subsection{Mutual Information}
\label{section4_1}
From \cite{BIB18}, the mutual information between the input and the output can be expressed as
\begin{equation}
{\cal I}(x,h;y) = {\cal I}\left( {h;y|x} \right) + {\cal I}(x;y),
\label{eq16}
\end{equation}
where ${\cal I}\left( {h;y|x} \right)$ denotes the conditional mutual information between $h$ and $y$ when given $x$,
${\cal I}(x;y)$ is the mutual information between $x$ and $y$.

According to the channel model in (\ref{eq1}), and using the probability theory,
the mutual information is analyzed in the following theorem.

\begin{theorem}
For the SM based VLC, the theoretical expression of the mutual information is given by (\ref{eq23}) as shown at the top of the next page,
\begin{table*}\normalsize
\begin{eqnarray}
{\cal I}\left( {x,h;y} \right) \!\!\!\! &=&\!\!\!\! \frac{{{{\left( {M \!-\! {2^p}} \right)}^2} \!+\! {{\left( {{2^{p + 1}} \!-\! M} \right)}^2}}}{{{2^{2p}}}}\left( {p \!+\! 1} \right) \!+\! \frac{{M \!-\! {2^p}}}{{{2^p}}}{\log _2}\frac{{{2^{p + q - 1}}}}{{M \!-\! {2^p}}} \!+\! \frac{{{2^{p + 1}} \!-\! M}}{{{2^p}}}{\log _2}\frac{{{2^{p + q}}}}{{{2^{p + 1}} \!-\! M}}\nonumber\\
&&\!\!\!\!\!\!\!\!\! + \frac{{M \!-\! {2^p}}}{{{2^{2p + q}}}}\!\!\!\!\sum\limits_{m \in \mathbf{\Psi}  \cup \mathbf{\Phi} }\! {\sum\limits_{{x_i} \in {\mathbf{A}}} {{\mathbb{E}_z}\!\!\left[\! {{{\log }_2}\! {\frac{{\frac{{\exp \left[ { - \frac{{{z^2}}}{{2(1 + {h_m}{x_i}{\varsigma ^2}){\sigma ^2}}}} \right]}}{{\sqrt {1 + {h_m}{x_i}{\varsigma ^2}} }}}}{{\sum\limits_{{x_{{i_2}}} \in {\bf{A}}} \!{\sum\limits_{{m_2} \in \mathbf{\Psi}  \cup \mathbf{\Phi} }\!\!\! {\frac{{\exp \!\left[\! { - \frac{{{{\left( {z + d_{m,i}^{{m_2},{i_2}}} \right)}^2}}}{{2\left(\! {1 + {h_{{m_2}}}{x_{{i_2}}}{\varsigma ^2}}\! \right){\sigma ^2}}}}\!\! \right]}}{{\sqrt {1 + {h_{{m_2}}}{x_{{i_2}}}{\varsigma ^2}} }}} } \! +\! \frac{{{2^{p + 1}} \!-\! M}}{{M - {2^p}}}\!\!\!\sum\limits_{{x_{{i_2}}} \in {\bf{B}}}\! {\sum\limits_{{m_2} \in \mathbf{\Xi} } \!\!{\frac{{\exp\!\! \left[\!\! { - \frac{{{{\left( {z + d_{m,i}^{{m_2},{i_2}}} \right)}^2}}}{{2\left(\! {1 + {h_{{m_2}}}{x_{{i_2}}}{\varsigma ^2}} \!\right){\sigma ^2}}}} \!\! \right]}}{{\sqrt {1 + {h_{{m_2}}}{x_{{i_2}}}{\varsigma ^2}} }}} } }}}\! }\! \right]} } \nonumber\\
&&\!\!\!\!\!\!\!\!\! + \frac{{{2^{p \!+\! 1}} \!\!-\!\! M}}{{{2^{2p + q}}}}\!\!\sum\limits_{m \in \mathbf{\Xi} }\! {\sum\limits_{{x_i} \in {\bf{B}}} {\!{\mathbb{E}_z}\!\!\!\left[\! {{{\log }_2}\!\! {\frac{{\frac{{\exp \left[ { - \frac{{{z^2}}}{{2(1 + {h_m}{x_i}{\varsigma ^2}){\sigma ^2}}}} \right]}}{{2\sqrt {1 + {h_m}{x_i}{\varsigma ^2}} }}}}{{\frac{{M - {2^p}}}{{{2^{p \!+\!1}} \!-\! M}}\!\!\!\sum\limits_{{x_{{i_2}}} \in {\bf{A}}} \! {\sum\limits_{{m_2} \in \mathbf{\Psi}  \cup \mathbf{\Phi} }\!\! {\frac{{\exp \!\left[\! { - \frac{{{{\left( {z + d_{m,i}^{{m_2},{i_2}}} \right)}^2}}}{{2\left(\! {1 \!+\! {h_{{m_2}}}{x_{{i_2}}}{\varsigma ^2}} \!\right){\sigma ^2}}}}\! \right]}}{{\sqrt {1 + {h_{{m_2}}}{x_{{i_2}}}{\varsigma ^2}} }}} }  \!\!+\!\!\! \sum\limits_{{x_{{i_2}}} \in {\bf{B}}}\!{\sum\limits_{{m_2} \in \mathbf{\Xi} }\!\! {\frac{{\exp\!\left[\! { - \frac{{{{\left(\! {z + d_{m,i}^{{m_2},{i_2}}}\! \right)}^2}}}{{2\left(\! {1 \!+\! {h_{{m_2}}}{x_{{i_2}}}{\varsigma ^2}} \! \right){\sigma ^2}}}} \right]}}{{\sqrt {1 + {h_{{m_2}}}{x_{{i_2}}}{\varsigma ^2}} }}} } }}} } \!\!\right]} }.
 \label{eq23}
\end{eqnarray}
\hrulefill
\end{table*}
where $d_{m,i}^{{m_2},{i_2}} = {h_m}{x_i} - {h_{{m_2}}}{x_{{i_2}}}$, and $z \sim {\cal N}(0,(1 + {h_m}{x_i}{\varsigma ^2}){\sigma ^2})$ holds when given $h_m$ and $x_i$.
\label{the1}
\end{theorem}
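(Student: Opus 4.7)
The plan is to start from the chain rule (\ref{eq16}), $\mathcal{I}(x,h;y)=\mathcal{I}(h;y|x)+\mathcal{I}(x;y)$, and to evaluate each summand by direct substitution of the mass functions (\ref{eq7})--(\ref{eq12}) and the densities (\ref{eq13})--(\ref{eq15}). For $\mathcal{I}(x;y)$ I would use the integral form $\mathcal{I}(x;y)=\sum_{x_i}p(x_i)\int p(y|x_i)\log_2[p(y|x_i)/p(y)]\,dy$ and split the outer sum into the $x_i\in\mathbf{A}$ and $x_i\in\mathbf{B}$ branches dictated by (\ref{eq12}). Expanding the ratio as $\log_2 p(y|x_i)-\log_2 p(y)$ exposes the marginal entropy $-\sum p(x_i)\log_2 p(x_i)$, which after plugging in the probabilities from (\ref{eq12}) and the cardinalities $|\mathbf{A}|=2^{q-1}$, $|\mathbf{B}|=2^q$ produces exactly the second and third terms of (\ref{eq23}).

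For $\mathcal{I}(h;y|x)$ I would use the corresponding form $\mathcal{I}(h;y|x)=\sum_{x_i,m}p(x_i,h_m)\int p(y|x_i,h_m)\log_2[p(y|x_i,h_m)/p(y|x_i)]\,dy$, with joint mass $p(x_i,h_m)=1/2^{p+q}$ on the valid pairs (either $x_i\in\mathbf{A}$, $m\in\mathbf{\Psi}\cup\mathbf{\Phi}$, or $x_i\in\mathbf{B}$, $m\in\mathbf{\Xi}$). Adding this to $\mathcal{I}(x;y)$, the identity $\log_2[p(y|x,h)/p(y|x)]+\log_2[p(y|x)/p(y)]=\log_2[p(y|x,h)/p(y)]$ collapses both integral families into a single double sum $\sum_{x_i,m}p(x_i,h_m)\int p(y|x_i,h_m)\log_2[p(y|x_i,h_m)/p(y)]\,dy$. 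In every inner integral I would perform the change of variable $z=y-h_m x_i$: then $z\sim\mathcal{N}(0,(1+h_m x_i\varsigma^2)\sigma^2)$, any other center $h_{m_2}x_{i_2}$ appearing in $p(y)$ through (\ref{eq15}) becomes $z+d_{m,i}^{m_2,i_2}$, and the common $\sqrt{2\pi\sigma^2}$ cancels between numerator and denominator. Factoring the mixture weights $(M-2^p)/2^{2p+q}$ and $(2^{p+1}-M)/2^{2p+q}$ of (\ref{eq15}) out of $\log_2$ then leaves the two double-sum $\mathbb{E}_z[\,\cdot\,]$ expressions that form the last two terms of (\ref{eq23}), and the constants extracted from the logarithm, combined with the entropy contribution from the first step, assemble into the first three terms.

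The main obstacle is bookkeeping rather than analysis: the four sub-cases produced by $\mathbf{A}$ versus $\mathbf{B}$ crossed with $\mathbf{\Psi}\cup\mathbf{\Phi}$ versus $\mathbf{\Xi}$ each carry their own prefactors through the change of variable and the log-split, and the coefficients $(M-2^p)/2^p$ and $(2^{p+1}-M)/2^p$ must be tracked so that they collapse into the compact form shown. I expect the most delicate step to be verifying that the constants pulled from $\log_2[p(y|x,h)/p(y)]$, after being weighted by $p(x_i,h_m)$ and summed over all valid $(x_i,h_m)$ pairs, telescope into the leading closed-form expression $[(M-2^p)^2+(2^{p+1}-M)^2](p+1)/2^{2p}$.
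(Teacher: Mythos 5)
Your proposal is correct and follows essentially the same route as the paper's Appendix \ref{appa}: the chain-rule decomposition (\ref{eq16}), direct substitution of (\ref{eq7})--(\ref{eq15}), and the change of variable $z=y-h_m x_i$ so that other mixture centers appear as $z+d_{m,i}^{m_2,i_2}$. The only cosmetic difference is that you collapse $\log_2[p(y|x,h)/p(y|x)]+\log_2[p(y|x)/p(y)]$ into a single integrand before integrating, whereas the paper evaluates ${\cal I}(h;y|x)$ and ${\cal I}(x;y)$ separately as (\ref{eq18}) and (\ref{eq22}) and then adds them; both yield (\ref{eq23}).
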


\begin{proof}
See Appendix \ref{appa}.
\end{proof}

\begin{remark}
Define the SNR as $\gamma  = P_t/{\sigma ^2}$.
It can be easily proved that ${\cal I}\left( {x,h;y} \right)$ in (\ref{eq23}) is a monotonic increasing function with $\gamma $.
Moreover, when $\gamma  \to \infty $, we have
\begin{eqnarray}
\mathop {\lim }\limits_{\gamma  \to \infty } {\cal I}\left( {x,h;y} \right)
\!\!\!\! &=&\!\!\!\! \frac{{{{\left( {M - {2^p}} \right)}^2}}\left( {p + 1} \right)}{{{2^{2p}}}} \nonumber\\
&+&\!\!\!\! \frac{{{{\left( {{2^{p + 1}} - M} \right)}^2}}p}{{{2^{2p}}}} + \frac{{M - {2^p}}}{{{2^p}}}{\log _2}\frac{{{2^{p + q - 1}}}}{{M - {2^p}}}\nonumber\\
&+&\!\!\!\! \frac{{{2^{p + 1}} - M}}{{{2^p}}}{\log _2}\frac{{{2^{p + q}}}}{{{2^{p + 1}} - M}}.
 \label{eq24}
\end{eqnarray}
Moreover, when $\gamma  \to 0$, we can get (\ref{eq25}) as shown at the top of the next page.
\begin{table*}\normalsize
\begin{eqnarray}
 \mathop {\lim }\limits_{\gamma  \to 0} {\cal I}\left( {x,h;y} \right) \!\!\!\!&=&\!\!\!\! \frac{{{{\left( {M \!-\! {2^p}} \right)}^2}}(p + 1)}{{{2^{2p}}}} \!+\! \frac{{{{\left( {{2^{p + 1}} \!-\! M} \right)}^2}}p}{{{2^{2p}}}} \!+\! \frac{{M \!-\! {2^p}}}{{{2^p}}}{\log _2}\frac{{{2^{p + q - 1}}}}{{M \!-\! {2^p}}} \!+\! \frac{{{2^{p + 1}} \!-\! M}}{{{2^p}}}{\log _2}\frac{{{2^{p + q}}}}{{{2^{p + 1}} \!-\! M}}\nonumber \\
 &-&\!\!\!\! \frac{{M \!-\! {2^p}}}{{{2^{2p + q}}}}\!\!\sum\limits_{m \in \mathbf{\Psi}  \cup \mathbf{\Phi} }\! {\sum\limits_{{x_i} \in {\bf{A}}} \!\! {{\mathbb{E}_z}\!\!\!\left[\! {{{\log }_2}\!\!\left(\! {\sum\limits_{{x_{{i_2}}} \in {\bf{A}}}\! {\sum\limits_{{m_2} \in \mathbf{\Psi}  \cup \mathbf{\Phi} }\!\! {\frac{{\sqrt {1 \!+\! {h_m}{x_i}{\varsigma ^2}} }}{{\sqrt {1 \!+\! {h_{{m_2}}}{x_{{i_2}}}{\varsigma ^2}} }}} }  \!+\! \frac{{{2^{p + 1}} \!-\! M}}{{M \!-\! {2^p}}}\!\!\sum\limits_{{x_{{i_2}}} \in {\bf{B}}}\! {\sum\limits_{{m_2} \in \mathbf{\Xi} }\!\! {\frac{{\sqrt {1 \!+\! {h_m}{x_i}{\varsigma ^2}} }}{{\sqrt {1 \!+\! {h_{{m_2}}}{x_{{i_2}}}{\varsigma ^2}} }}} } }\!\! \right)}\!\! \right]} } \nonumber\\
 &-&\!\!\!\!\frac{{{2^{p + 1}} \!-\! M}}{{{2^{2p + q}}}}\!\!\sum\limits_{m \in \mathbf{\Xi} }\! {\sum\limits_{{x_i} \in {\bf{B}}}\!\! {{\mathbb{E}_z}\!\!\!\left[\!\! {{{\log }_2}\!\!\left(\!\! {\frac{{M - {2^p}}}{{{2^{p + 1}} \!-\! M}}\!\!\sum\limits_{{x_{{i_2}}} \in {\bf{A}}}\! {\sum\limits_{{m_2} \in \mathbf{\Psi}  \cup \mathbf{\Phi} } \!\!{\frac{{\sqrt {1 \!+\! {h_m}{x_i}{\varsigma ^2}} }}{{\sqrt {1 \!+\! {h_{{m_2}}}{x_{{i_2}}}{\varsigma ^2}} }}} }  \!+\!\!\! \sum\limits_{{x_{{i_2}}} \in {\bf{B}}}\! {\sum\limits_{{m_2} \in \mathbf{\Xi} } \!\! {\frac{{\sqrt {1 \!+\! {h_m}{x_i}{\varsigma ^2}} }}{{\sqrt {1 \!+\! {h_{{m_2}}}{x_{{i_2}}}{\varsigma ^2}} }}} } }\!\! \right)}\!\! \right]} }.
 \label{eq25}
\end{eqnarray}
\hrulefill
\end{table*}
\end{remark}

\begin{remark}
When $M$ is a power of 2 (i.e., $M = {2^p}$) and $N = {2^q}$, the mutual information ${\cal I}\left( {x,h;y} \right)$ in (\ref{eq23}) becomes
\begin{eqnarray}
\!\!\!\!{\cal I}\left( {x,h;y} \right)\!\!\!\!\!\!\!  &=&\!\!\!\!\! {\log _2}(MN) - \frac{1}{{MN}}\sum\limits_{m = 1}^M \sum\limits_{i = 1}^N \nonumber\\
&&\!\!\!\!\!{{\mathbb{E}_z}\left[ {{{\log }_2}\sum\limits_{{i_2} = 1}^N {\sum\limits_{{m_2} = 1}^M {\frac{{\sqrt {1 + {h_m}{x_i}{\varsigma ^2}} }}{{\sqrt {1 + {h_{{m_2}}}{x_{{i_2}}}{\varsigma ^2}} }}} } } \right.}  \nonumber\\
&\times&\!\!\!\!\!\!\!\left. {  \exp\!\!\! \left(\! {\frac{{{z^2}}}{{2(1 \!\!+\!\! {h_m}{x_i}{\varsigma ^2}){\sigma ^2}}} \!-\! \frac{{{{(z + d_{m,i}^{{m_2},{i_2}})}^2}}}{{2(1 \!\!+\!\! {h_{{m_2}}}{x_{{i_2}}}{\varsigma ^2}){\sigma ^2}}}}\!\! \right)}\!\!\! \right]\!\!.
\label{eq27}
\end{eqnarray}
Furthermore, when $M = {2^p}$ and $N = {2^q}$, (\ref{eq24}) and (\ref{eq25}) become
\begin{equation}
\mathop {\lim }\limits_{\gamma  \to \infty } {\cal I}\left( {x,h;y} \right) = {\log _2}(MN),
\label{eq28}
\end{equation}
and
\begin{eqnarray}
&&\mathop {\lim }\limits_{\gamma  \to 0} {\cal I}\left( {x,h;y} \right) = {\log _2}(MN) - \frac{1}{{MN}}\nonumber\\
&&\;\;\;\;\;\times \sum\limits_{m = 1}^M {\sum\limits_{i = 1}^N {{{\log }_2}\!\!\left( {\sum\limits_{{m_2} = 1}^M {\sum\limits_{{i_2} = 1}^N \!\!{\frac{{\sqrt {1 \!+\! {h_m}{x_i}{\varsigma ^2}} }}{{\sqrt {1 \!+\! {h_{{m_2}}}{x_{{i_2}}}{\varsigma ^2}} }}} } }\!\! \right)} }.
\label{eq29}
\end{eqnarray}
Note that the results in (\ref{eq27})-(\ref{eq29}) are consistent with the conclusions in \cite{BIB06}.
\end{remark}

\subsection{Lower bound on Mutual Information}
\label{section4_2}
In (\ref{eq23}), it is very difficult to derive a closed-form expression for the mutual information.
To reduce computational complexity and obtain more insights,
a lower bound on the mutual information is derived in the following theorem.

\begin{theorem}
For the SM based VLC, a closed-form expression for the lower bound of the mutual information is given by (\ref{eq34}) as shown at the top of the next page,
\begin{table*}\normalsize
\begin{eqnarray}
{{\cal I}_{{\rm{Low}}}}\!\left( {x,h;y} \right) \!\!&=&\!\! \frac{{{{( {M \!-\! {2^p}})}^2} \!\!+\!\! {{\left( {{2^{p \!+\! 1}} \!\!-\!\! M} \right)}^2}}}{{{2^{2p}}}}\!\!\left(\!\! {p \!+\! 1 \!-\! \frac{{{{\log }_2}e}}{2}}\! \!\right) \!\!+\! \frac{{M \!-\! {2^p}}}{{{2^p}}}{\log _2}\frac{{{2^{p \!+\! q \!-\! 1}}}}{{M \!\!-\!\! {2^p}}} \!+\! \frac{{{2^{p + 1}} \!-\! M}}{{{2^p}}}{\log _2}\frac{{{2^{p + q}}}}{{{2^{p + 1}} \!-\! M}} \nonumber \\
&-& \frac{{M - {2^p}}}{{{2^{2p + q}}}}\!\!\sum\limits_{m \in \mathbf{\Psi}  \cup \mathbf{\Phi }} {\sum\limits_{{x_i} \in {\bf{A}}} {{{\log }_2}\!\!\left[ {\sum\limits_{{x_{{i_2}}} \in \mathbf{A}} {\sum\limits_{{m_2} \in \mathbf{\Psi}  \cup \mathbf{\Phi} } {\frac{{\sqrt {1 \!+\! {h_m}{x_i}{\varsigma ^2}} }}{{\sqrt {2\left( {1 \!+\! {h_{{m_2}}}{x_{{i_2}}}{\varsigma ^2}} \right)} }}\exp\!\left(\!\! { - \frac{{{{\left( {d_{m,i}^{{m_2},{i_2}}} \right)}^2}}}{{4\left( {1 \!+\! {h_{{m_2}}}{x_{{i_2}}}{\varsigma ^2}} \right){\sigma ^2}}}}\!\! \right)} } } \right.} } \nonumber\\
&+& \left. {  \frac{{{2^{p + 1}} - M}}{{M - {2^p}}}\sum\limits_{{x_{{i_2}}} \in \mathbf{B}} {\sum\limits_{{m_2} \in \mathbf{\Xi} } {\frac{{\sqrt {1 + {h_m}{x_i}{\varsigma ^2}} }}{{\sqrt {2\left( {1 + {h_{{m_2}}}{x_{{i_2}}}{\varsigma ^2}} \right)} }}} \exp\left( { - \frac{{{{\left( {d_{m,i}^{{m_2},{i_2}}} \right)}^2}}}{{4\left( {1 + {h_{{m_2}}}{x_{{i_2}}}{\varsigma ^2}} \right){\sigma ^2}}}} \right)} } \right]\nonumber\\
 &-& \frac{{{2^{p \!+\! 1}} \!-\! M}}{{{2^{2p + q}}}}\!\!\sum\limits_{m \in \mathbf{\Xi} }\! {\sum\limits_{{x_i} \in {\bf{B}}}\!\! {{{\log }_2}\!\!\!\left[\!\! {\frac{{M \!-\! {2^p}}}{{{2^{p \!+\! 1}} \!-\! M}}\!\!\sum\limits_{{x_{{i_2}}} \in {\bf{A}}}\! {\sum\limits_{{m_2} \in \mathbf{\Psi}  \cup \mathbf{\Phi} }\!\!\!\! {\frac{{\sqrt {2\!\left( {1 \!+\! {h_m}{x_i}{\varsigma ^2}} \right)} }}{{\sqrt {1 \!+\! {h_{{m_2}}}{x_{{i_2}}}{\varsigma ^2}} }}\exp\!\!\left(\!\! { - \frac{{{{\left( {d_{m,i}^{{m_2},{i_2}}} \right)}^2}}}{{4\left( {1 \!+\! {h_{{m_2}}}{x_{{i_2}}}{\varsigma ^2}} \right){\sigma ^2}}}}\!\! \right)} } } \right.} }\nonumber \\
&+& \left. { \sum\limits_{{x_{{i_2}}} \in {\bf{B}}} {\sum\limits_{{m_2} \in \mathbf{\Xi} } {\frac{{\sqrt {2\left( {1 + {h_m}{x_i}{\varsigma ^2}} \right)} }}{{\sqrt {1 + {h_{{m_2}}}{x_{{i_2}}}{\varsigma ^2}} }}} \exp\left( { - \frac{{{{\left( {d_{m,i}^{{m_2},{i_2}}} \right)}^2}}}{{4\left( {1 + {h_{{m_2}}}{x_{{i_2}}}{\varsigma ^2}} \right){\sigma ^2}}}} \right)} } \right].
 \label{eq34}
\end{eqnarray}
\hrulefill
\end{table*}
where $d_{m,i}^{{m_2},{i_2}} = {h_m}{x_i} - {h_{{m_2}}}{x_{{i_2}}}$.
\label{the2}
\end{theorem}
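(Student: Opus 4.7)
The plan is to derive the closed-form lower bound by applying Jensen's inequality to the intractable $\mathbb{E}_z[\log_2(\cdot)]$ expectations in (\ref{eq23}), while retaining the first three algebraic terms verbatim. Each such piece has the form $\mathbb{E}_z[\log_2(N(z)/D(z))]=\mathbb{E}_z[\log_2 N(z)]-\mathbb{E}_z[\log_2 D(z)]$, where the numerator $N(z)=\exp(-z^2/(2(1+h_m x_i\varsigma^2)\sigma^2))/\sqrt{1+h_m x_i\varsigma^2}$ is a single Gaussian kernel and $D(z)$ is a weighted sum of translated kernels indexed by $(m_2,x_{i_2})$. I would split the log and handle the two halves separately.

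For the numerator, $\log_2 N(z)$ is a deterministic quadratic in $z$, so I would compute $\mathbb{E}_z[\log_2 N(z)]$ exactly using $\mathbb{E}_z[z^2]=(1+h_m x_i\varsigma^2)\sigma^2$. This contributes the constant $-\tfrac{1}{2}\log_2 e$ which, once aggregated against the weights in (\ref{eq7})--(\ref{eq12}), produces exactly the $-\tfrac{\log_2 e}{2}$ correction visible in the coefficient $(p+1-\tfrac{\log_2 e}{2})$ of (\ref{eq34}), together with a residual $-\tfrac{1}{2}\log_2(1+h_m x_i\varsigma^2)$ term to be absorbed into the final logarithm.

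For the denominator I would invoke Jensen's inequality in the direction dictated by the concavity of $\log_2$, namely $\mathbb{E}_z[\log_2 D(z)]\le\log_2\mathbb{E}_z[D(z)]$. Since this piece enters the mutual information with a minus sign, this gives a bona fide lower bound. The expectation $\mathbb{E}_z[D(z)]$ is a sum of Gaussian-integral expectations of the form $\mathbb{E}_z[\exp(-(z+d)^2/(2a_2\sigma^2))]$ with $z\sim\mathcal{N}(0,a_1\sigma^2)$, which I would evaluate in closed form by completing the square in the exponent so that each integrand reduces to a product of $\exp(-d^2/\text{stuff})$ and a prefactor in $a_1,a_2$. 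Combining the two contributions, aggregating over the two index sub-populations $\{m\in\mathbf{\Psi}\cup\mathbf{\Phi},\,x_i\in\mathbf{A}\}$ and $\{m\in\mathbf{\Xi},\,x_i\in\mathbf{B}\}$ with the weights from (\ref{eq12}), and absorbing the residual $\tfrac{1}{2}\log_2(1+h_m x_i\varsigma^2)$ into the outer logarithm produces (\ref{eq34}).

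The main obstacle is the bookkeeping rather than the analysis: the denominator inside each $\log$ inherits the two-block structure from (\ref{eq12}), so the cross-weights $(2^{p+1}-M)/(M-2^p)$ and $(M-2^p)/(2^{p+1}-M)$ must be carried correctly through the Jensen step, and the variance prefactors $\sqrt{1+h_m x_i\varsigma^2}$ and $\sqrt{1+h_{m_2}x_{i_2}\varsigma^2}$ must be tracked through the completion of the square so that they fall into the exact combination that appears inside the logarithm in (\ref{eq34}). Conceptually the argument is just the standard mixture-Gaussian entropy bound obtained via Jensen, but it has to be pushed through the asymmetric two-block input distribution imposed by the CABM scheme.
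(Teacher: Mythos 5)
Your proposal is correct and matches the paper's own proof essentially step for step: the paper likewise isolates the numerator's quadratic term (its $I_3=\tfrac{1}{2}\log_2 e$, yielding the $-\tfrac{\log_2 e}{2}$ correction), applies Jensen's inequality $\mathbb{E}_z[\log_2 D(z)]\le\log_2\mathbb{E}_z[D(z)]$ to the mixture-of-Gaussians denominator terms ($I_4$ and $I_5$), and evaluates the resulting Gaussian expectations by completing the square to obtain (\ref{eq34}). No substantive differences.
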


\begin{proof}
See Appendix \ref{appb}.
\end{proof}

\begin{remark}
It can be easily proved that ${{\cal I}_{{\rm{Low}}}}\left( {x,h;y} \right)$ in (\ref{eq34}) is a monotonic increasing function with $\gamma $. Moreover, when $\gamma  \to \infty $, (\ref{eq34}) becomes
\begin{eqnarray}
&&\!\!\!\!\!\!\!\!\!\!\!\!\!\!\!\!\!\!\!\!\mathop {\lim }\limits_{\gamma  \to \infty } {{\cal I}_{{\rm{Low}}}}\left( {x,h;y} \right)=\frac{{{{\left( {M - {2^p}} \right)}^2}}}{{{2^{2p}}}}\left( {p + \frac{3}{2} - \frac{1}{2}{{\log }_2}e} \right) \nonumber\\
&&\!\!\!\!\!\!\!\!\!\!\!\!\!+ \frac{{{{\left( {{2^{p + 1}} - M} \right)}^2}}}{{{2^{2p}}}}\left( {p + \frac{1}{2} - \frac{1}{2}{{\log }_2}e} \right)\nonumber\\
&&\!\!\!\!\!\!\!\!\!\!\!\!\!+ \frac{{M - {2^p}}}{{{2^p}}}{\log _2}\frac{{{2^{p + q - 1}}}}{{M - {2^p}}} + \frac{{{2^{p + 1}} - M}}{{{2^p}}}{\log _2}\frac{{{2^{p + q}}}}{{{2^{p + 1}} - M}},
\label{eq35}
\end{eqnarray}
and when $\gamma  \to 0$, (\ref{eq34}) becomes (\ref{eq36}) as shown at the top of the next page.
\begin{table*}\normalsize
\begin{eqnarray}
&&\!\!\!\!\!\!\!\! \mathop {\lim }\limits_{\gamma  \to 0} {{\cal I}_{{\rm{Low}}}}\left( {x,h;y} \right) \!=\! \frac{{{{\left( {M \!-\! {2^p}} \right)}^2}}}{{{2^{2p}}}}\left(\! {p \!+\! \frac{3}{2} \!-\! \frac{1}{2}{{\log }_2}e} \!\right) \!+\! \frac{{{{\left( {{2^{p + 1}} \!-\! M} \right)}^2}}}{{{2^{2p}}}}\left(\! {p \!+\! \frac{1}{2} \!-\! \frac{1}{2}{{\log }_2}e}\! \right)+ \frac{{M - {2^p}}}{{{2^p}}}{\log _2}\frac{{{2^{p + q - 1}}}}{{M - {2^p}}}\nonumber\\
&&\!\!\!\!\!\!\!\!  + \frac{{{2^{p + 1}} \!-\! M}}{{{2^p}}}{\log _2}\frac{{{2^{p + q}}}}{{{2^{p + 1}} \!-\! M}} \!-\! \frac{{M \!-\! {2^p}}}{{{2^{2p + q}}}}\!\!\!\!\sum\limits_{m \in \mathbf{\Psi}  \cup \mathbf{\Phi} } \!{\sum\limits_{{x_i} \in \mathbf{A}}\!\! {{{\log }_2}\!\!\left(\! {\sum\limits_{{x_{{i_2}}} \in \mathbf{A}}\! {\sum\limits_{{m_2} \in \mathbf{\Psi}  \cup \mathbf{\Phi} }\!\! {\frac{{\sqrt {1 \!+\! {h_m}{x_i}{\varsigma ^2}} }}{{\sqrt {1 \!+\! {h_{{m_2}}}{x_{{i_2}}}{\varsigma ^2}} }}} }  \!+\! \frac{{{2^{p + 1}} \!-\! M}}{{M \!-\! {2^p}}}\sum\limits_{{x_{{i_2}}} \in \mathbf{B}}\! {\sum\limits_{{m_2} \in \mathbf{\Xi} }\!\! {\frac{{\sqrt {1 \!+\! {h_m}{x_i}{\varsigma ^2}} }}{{\sqrt {1 \!+\! {h_{{m_2}}}{x_{{i_2}}}{\varsigma ^2}} }}} } } \!\!\right)} } \nonumber\\
&&\!\!\!\!\!\!\!\!- \frac{{{2^{p + 1}} \!-\! M}}{{{2^{2p + q}}}}\!\!\sum\limits_{m \in \mathbf{\Xi} }\! {\sum\limits_{{x_i} \in \mathbf{B}}\!\! {{{\log }_2}\!\!\!\left(\!\! {\frac{{M \!-\! {2^p}}}{{{2^{p \!+\! 1}} \!-\! M}}\!\!\!\sum\limits_{{x_{{i_2}}} \in \mathbf{A}}\! {\sum\limits_{{m_2} \in \mathbf{\Psi}  \cup \mathbf{\Phi} }\!\!\! {\frac{{\sqrt {1 \!+\! {h_m}{x_i}{\varsigma ^2}} }}{{\sqrt {1 \!+\! {h_{{m_2}}}{x_{{i_2}}}{\varsigma ^2}} }}} }  \!+\!\!\! \sum\limits_{{x_{{i_2}}} \in \mathbf{B}}\! {\sum\limits_{{m_2} \in \mathbf{\Xi} }\!\! {\frac{{\sqrt {1 \!+\! {h_m}{x_i}{\varsigma ^2}} }}{{\sqrt {1 \!+\! {h_{{m_2}}}{x_{{i_2}}}{\varsigma ^2}} }}} } }\!\! \right)} }.
\label{eq36}
\end{eqnarray}
\hrulefill
\end{table*}
\end{remark}

\begin{remark}
When $\gamma  \to \infty $ and $\gamma  \to 0$, the performance gap between (\ref{eq35}) and (\ref{eq36}) can be written, respectively, as
\begin{eqnarray}
&&\!\!\!\!\!\!\!\!\!\!\!\!\!\!\!\!\!\!\!\!\mathop {\lim }\limits_{\gamma  \to \infty } \left[ {{\cal I}\left( {x,h;y} \right) - {{\cal I}_{{\rm{Low}}}}\left( {x,h;y} \right)} \right] \nonumber\\
&&= \frac{{{{\left( {M - {2^p}} \right)}^2} + {{\left( {{2^{p + 1}} - M} \right)}^2}}}{{{2^{2p + 1}}}}\left( {\log_2 e  - 1} \right),
\label{eq37}
\end{eqnarray}
and
\begin{eqnarray}
&&\!\!\!\!\!\!\!\!\!\!\!\!\!\!\!\!\!\!\!\!\mathop {\lim }\limits_{\gamma  \to 0} \left[ {{\cal I}\left( {x,h;y} \right) - {{\cal I}_{{\rm{Low}}}}\left( {x,h;y} \right)} \right] \nonumber\\
&&= \frac{{{{\left( {M - {2^p}} \right)}^2} + {{\left( {{2^{p + 1}} - M} \right)}^2}}}{{{2^{2p + 1}}}}\left( {\log_2 e - 1} \right).
\label{eq38}
\end{eqnarray}
From (\ref{eq37}) and (\ref{eq38}), it can be concluded that, in the low and high SNR regimes, a constant performance gap exists between ${\cal I}\left( {x,h;y} \right)$ and ${{\cal I}_{{\rm{Low}}}}\left( {x,h;y} \right)$.
\label{rmk4}
\end{remark}

\begin{remark}
When $M$ is a power of 2 (i.e., $M = {2^p}$) and $N = {2^q}$, the lower bound of the mutual information ${{\cal I}_{{\rm{Low}}}}\left( {x,h;y} \right)$ in (\ref{eq34}) becomes
\begin{eqnarray}
&&\!\!\!\!\!\!\!\!\!\!{{\cal I}_{{\rm{Low}}}}\left( {x,h;y} \right) = {\log _2}MN + \frac{1}{2}\left( {1 - {{\log }_2}e} \right) - \frac{1}{{MN}}\sum\limits_{m = 1}^M \sum\limits_{i = 1}^N \nonumber\\
&&\!\!\!\!\!\!\!\!\!\! {{{\log }_2}\!\!\left[\! {\sum\limits_{{m_2} = 1}^M\! {\sum\limits_{{i_2} = 1}^N \!\!{\frac{{\sqrt {1 \!+\! {h_m}{x_i}{\varsigma ^2}} }}{{\sqrt {1 \!+\! {h_{{m_2}}}{x_{{i_2}}}{\varsigma ^2}} }}\exp\!\!\left[\!\! {  \frac{{{{-\left( {d_{m,i}^{{m_2},{i_2}}} \right)}^2}}}{{4\left( {1 \!+\! {h_{{m_2}}}\!{x_{{i_2}}}\!{\varsigma ^2}} \right)\!{\sigma ^2}}}}\!\! \right]} } }\! \right]}.
 \label{eq40}
\end{eqnarray}
Furthermore, when $M = {2^p}$ and $N = {2^q}$, (\ref{eq35}) and (\ref{eq36}) become
\begin{eqnarray}
\mathop {\lim }\limits_{\gamma  \to \infty } {{\cal I}_{{\rm{Low}}}}\left( {x,h;y} \right) = {\log _2}\left( {MN} \right) - \frac{1}{2}\left( {\log_2 e - 1} \right),
\label{eq41}
\end{eqnarray}
and
\begin{eqnarray}
&&\!\!\!\!\!\!\!\!\!\!\!\!\!\!\!\!\!\!\!\! \mathop {\lim }\limits_{\gamma  \to 0} {{\cal I}_{{\rm{Low}}}}\left( {x,h;y} \right) \!=\! {\log _2}\left( {MN} \right) \!-\! \frac{1}{2}\left( {\log_2 e \!-\! 1} \right) \nonumber\\
&&\!-\! \frac{1}{{MN}}\! \sum\limits_{m = 1}^M \!{\sum\limits_{i = 1}^N \!{{{\log }_2}\!\!\left( {\sum\limits_{{m_2} = 1}^M \!{\sum\limits_{{i_2} = 1}^N \!{\frac{{\sqrt {1 \!+\! {h_m}{x_i}{\varsigma ^2}} }}{{\sqrt {1 \!+\! {h_{{m_2}}}{x_{{i_2}}}{\varsigma ^2}} }}} } } \!\!\right)} }.
\label{eq42}
\end{eqnarray}
Furthermore, when $M = {2^p}$ and $N = {2^q}$, the gap between (\ref{eq28}) and (\ref{eq41}) is given by
\begin{eqnarray}
\mathop {\lim }\limits_{\gamma  \to \infty } \left[ {{\cal I}\left( {x,h;y} \right) - {{\cal I}_{\rm Low}}\left( {x,h;y} \right)} \right] = \frac{1}{2}\left( {\log_2} e - 1 \right).
\label{eq43}
\end{eqnarray}
Similarly, the gap between (\ref{eq29}) and (\ref{eq42}) is given by
\begin{eqnarray}
\mathop {\lim }\limits_{\gamma  \to 0} \left[ {{\cal I}\left( {x,h;y} \right) - {{\cal I}_{\rm Low}}\left( {x,h;y} \right)} \right] = \frac{1}{2}\left( {\log_2} e - 1 \right).
\label{eq44}
\end{eqnarray}
It can be seen from (\ref{eq43}) and (\ref{eq44}) that, when $M$ is a power of 2, the performance gap between ${\cal I}\left( {x,h;y} \right)$ and ${{\cal I}_{{\rm{Low}}}}\left( {x,h;y} \right)$ is a constant for both low and high SNR regimes, which is the same as that in \cite{BIB06}. Moreover, (\ref{eq43}) and (\ref{eq44}) can also be derived by letting $M = {2^p}$ in \emph{Remark \ref{rmk4}}.
\end{remark}

\section{Precoding Design}
\label{section5}
IT can be known from (\ref{eq28}) that the upper bound of the mutual information is $\log_2{MN}$.
However, such an upper bound may not be achievable in some cases.
To further improve the system performance, a precoding scheme is proposed in this section. By employing the precoding, the $i$-th symbol transmitted by the $m$-th LED is multiplied by a coefficient ${w_{m,i}}$, so the precoding process can be regarded as a mapping
\begin{eqnarray}
\eta :{\bf{H}} \times {\bf{X}} \to {\bf{R}},\;{\rm{where}}\;{\bf{R}} = \left\{ {{w_{m,i}}{h_m}{x_i}|\forall m,i} \right\},
\label{eq45}
\end{eqnarray}
where ${\bf{R}}$ denotes the received signal space for non-noise channel.

When using precoding, the lower bound of the mutual information (\ref{eq34}) can be reformulated as (\ref{eq46}) as shown at the top of the next page,
\begin{table*}\normalsize
\begin{eqnarray}
&&\!\!\!\!\! {{\cal I}_{\rm Low}}\!( {x,h;y}) \!\!=\!\! \frac{{{{\left( {M \!\!-\!\! {2^p}} \right)}^2} \!\!+\!\! {{\left( {{2^{p + 1}} \!\!-\!\! M} \right)}^2}}}{{{2^{2p}}}}\!\!\left(\!\! {p \!+\! 1 \!-\! \frac{{{{\log }_2}e}}{2}}\!\! \right) \!\!+\!\! \frac{{M \!\!-\!\! {2^p}}}{{{2^p}}}{\log _2}\frac{{{2^{p + q - 1}}}}{{M \!\!-\!\! {2^p}}} \!\!+\!\! \frac{{{2^{p + 1}} \!\!-\!\! M}}{{{2^p}}}{\log _2}\frac{{{2^{p + q}}}}{{{2^{p \!+\! 1}}\!\! -\!\! M}} \nonumber \\
&&\!\!\!\!\! - \frac{{M \!-\! {2^p}}}{{{2^{2p + q}}}}\!\!\sum\limits_{m \in \mathbf{\Psi}  \cup \mathbf{\Phi} }\! {\sum\limits_{{x_i} \in {\bf{A}}} {{\!{\log }_2}\!\!\left[\! {\sum\limits_{{x_{{i_2}}} \in \mathbf{A}}\! {\sum\limits_{{m_2} \in \mathbf{\Psi}  \cup \mathbf{\Phi} }\!\! {\frac{{\sqrt {1 \!+\! {w_{m,i}}{h_m}{x_i}{\varsigma ^2}} }}{{\sqrt {2\left( {1 \!+\! {w_{{m_2},{i_2}}}{h_{{m_2}}}{x_{{i_2}}}{\varsigma ^2}} \right)} }}\exp \!\left(\!\! { \frac{{{{-\left( {d_{m,i}^{{m_2},{i_2}}} \right)}^2}}}{{4\left( {1 \!+\! {w_{{m_2},{i_2}}}{h_{{m_2}}}{x_{{i_2}}}{\varsigma ^2}} \right)\!{\sigma ^2}}}}\!\! \right)} } } \right.} } \nonumber\\
&&\!\!\!\!\!\left. { + \frac{{{2^{p + 1}} - M}}{{M - {2^p}}}\sum\limits_{{x_{{i_2}}} \in {\bf{B}}} {\sum\limits_{{m_2} \in \mathbf{\Xi} } {\frac{{\sqrt {1 + {w_{m,i}}{h_m}{x_i}{\varsigma ^2}} }}{{\sqrt {2\left( {1 + {w_{{m_2},{i_2}}}{h_{{m_2}}}{x_{{i_2}}}{\varsigma ^2}} \right)} }}} \exp \left( { - \frac{{{{\left( {d_{m,i}^{{m_2},{i_2}}} \right)}^2}}}{{4\left( {1 + {w_{{m_2},{i_2}}}{h_{{m_2}}}{x_{{i_2}}}{\varsigma ^2}} \right){\sigma ^2}}}} \right)} } \right] \nonumber \\
&&\!\!\!\!\! - \frac{{{2^{p \!+\! 1}} \!\!-\!\! M}}{{{2^{2p + q}}}}\!\!\sum\limits_{m \in \mathbf{\Xi} } \!{\sum\limits_{{x_i} \in {\bf{B}}} \!\!{{{\log }_2}\!\!\!\left[\! {\frac{{M \!-\! {2^p}}}{{{2^{p \!+\! 1}} \!\!-\!\! M}}\!\!\sum\limits_{{x_{{i_2}}} \in \mathbf{A}}\! {\sum\limits_{{m_2} \in \mathbf{\Psi}  \cup \mathbf{\Phi} }\!\! {\frac{{\sqrt {2\left( {1 \!+\! {w_{m,i}}{h_m}{x_i}{\varsigma ^2}} \right)} }}{{\sqrt {1 \!+\! {w_{{m_2},{i_2}}}{h_{{m_2}}}{x_{{i_2}}}{\varsigma ^2}} }}\!\exp\!\! \left(\!\! {  \frac{{{{-\left( {d_{m,i}^{{m_2},{i_2}}} \right)}^2}}}{{4\left( {1 \!+\! {w _{{m_2},{i_2}}}{h_{{m_2}}}{x_{{i_2}}}{\varsigma ^2}} \right)\!{\sigma ^2}}}}\!\! \right)} } } \right.} } \nonumber\\
&&\!\!\!\!\!\left. { + \sum\limits_{{x_{{i_2}}} \in {\bf{B}}} {\sum\limits_{{m_2} \in \Xi } {\frac{{\sqrt {2\left( {1 + {w_{m,i}}{h_m}{x_i}{\varsigma ^2}} \right)} }}{{\sqrt {1 + {w_{{m_2},{i_2}}}{h_{{m_2}}}{x_{{i_2}}}{\varsigma ^2}} }}} \exp \left( { - \frac{{{{\left( {d_{m,i}^{{m_2},{i_2}}} \right)}^2}}}{{4\left( {1 + {w_{{m_2},{i_2}}}{h_{{m_2}}}{x_{{i_2}}}{\varsigma ^2}} \right){\sigma ^2}}}} \right)} } \right].
\label{eq46}
\end{eqnarray}
\hrulefill
\end{table*}
where $d_{m,i}^{{m_2},{i_2}} = |{w_{m,i}}{h_m}{x_i} - {w_{{m_2},{i_2}}}{h_{{m_2}}}{x_{{i_2}}}|$ denotes the distance between any two points in the received signal space ${\bf{R}}$.

In (\ref{eq46}), the minimum distance $ \mathop {\min }\limits_{(m,i) \ne ({m_2},{i_2})} \left\{ {d_{m,i}^{{m_2},{i_2}}}/\sqrt{1+w_{m_2,i_2}h_{m_2} x_{i_2} \varsigma ^2} \right\}$ dominates the total performance of the system,
and thus the purpose of the precoding in this paper is to maximize the minimum distance.
To facilitate the comparison, the average optical intensity without precoding is set to be the same as that with precoding.
Mathematically, the optimization problem for precoding is formulated as
\begin{eqnarray}
    && \mathop {\max }\limits_{{w_{m,i}}} \mathop {\min }\limits_{(m,i) \ne ({m_2},{i_2})} \left\{ {d_{m,i}^{{m_2},{i_2}}}/\sqrt{1+w_{m_2,i_2}h_{m_2} x_{i_2} \varsigma ^2} \right\} \nonumber \\
    \text{s.t.} && {\sum\limits_{m \in \mathbf{\Psi}  \cup \mathbf{\Phi} }\! {\sum\limits_{{x_i} \in {\bf{A}}} {{w_{m,i}}{h_m}{x_i}} }  \!+\! \sum\limits_{m \in \mathbf{\Xi} }\! {\sum\limits_{{x_i} \in {\bf{B}}}\!\! {{w_{m,i}}{h_m}{x_i}} } } \nonumber\\
     &&\;\;\;\;\;\;=\!\!\! {\sum\limits_{m \in \mathbf{\Psi}  \cup \mathbf{\Phi} } \!{\sum\limits_{{x_i} \in {\bf{A}}} \!\!{{h_m}{x_i}} }  \!+\! \sum\limits_{m \in \mathbf{\Xi} }\! {\sum\limits_{{x_i} \in {\bf{B}}}\!\! {{h_m}{x_i}} } }.
\label{eq47}
\end{eqnarray}
Note that the optimization problem (\ref{eq47}) is non-convex and non-differentiable,
which is difficult to solve.
Let $L={d_{m,i}^{{m_2},{i_2}}}/\sqrt{1+w_{m_2,i_2}h_{m_2} x_{i_2} \varsigma ^2}$, a continuous approximation is provided for the inner minimum function \cite{BIB19}
\begin{eqnarray}
&&\!\!\!\!\!\!\!\!\!\!\!\!\mathop {\min }\limits_{(m,i) \ne ({m_2},{i_2})}\! \{ L\} \! \approx  - \frac{1}{\rho }\ln\!\! \left[\!\! {\sum\limits_{m \!\in\! \bf{\Psi}  \cup \bf{\Phi }}\! {\sum\limits_{{x_i} \!\in\! {\bf{A}}}\!\! {\left(\!\! {\sum\limits_{\scriptstyle{m_2} \in {\bf{\Psi}}  \cup {\bf{\Phi}} \hfill\atop
\scriptstyle{m_2} \ne m\hfill}\!\! {\sum\limits_{\scriptstyle{x_{{i_2}}} \!\in\! {\bf{A}}\hfill\atop
\scriptstyle{i_2} \ne i\hfill}\!\!\! {e^{\! - \rho L\!}} } } \right.} } } \right.\nonumber \\
&&\left. { + \sum\limits_{{m_2} \in \bf{\Xi} } {\sum\limits_{{x_{{i_2}}} \in {\bf{B}}} {e^{ - \rho L}} } } \right)\nonumber\\
&&\!\!\!\!\!\!\!\!\!\!\left. { +\! \sum\limits_{m \in \bf{\Xi} } {\sum\limits_{{x_i} \in {\bf{B}}}\! {\left(\! {\sum\limits_{{m_2} \in \bf{\Psi}  \cup \bf{\Phi} } {\sum\limits_{{x_{{i_2}}} \in {\bf{A}}}\!\! {e^ {\! - \rho L\!}} }  \!+\!\!\! \sum\limits_{\scriptstyle{m_2} \in {\bf{\Xi}} \hfill\atop
\scriptstyle{m_2} \ne m\hfill}\!\! {\sum\limits_{\scriptstyle{x_{{i_2}}} \in {\bf{B}}\hfill\atop
\scriptstyle{i_2} \ne i\hfill} \!\!{e^{\! - \rho L\!}} } }\! \right)} } } \right]\!\!,
\label{eq48}
\end{eqnarray}
where $\rho $ is positive and can be arbitrarily increased to improve model accuracy.

Submitting (\ref{eq48}) into (\ref{eq47}), both the objective and the constraint become differentiable,
and the problem can be solved by using the interior point method \cite{BIB20}.
However, the employed approximation (\ref{eq48}) may be not very good in some cases.
For example, when the interior point method approaches a local optima, the smallest distances between all points approach the same value.
This will cause (\ref{eq48}) to produce poor results for small value of $\rho $.
In this case, the value of $\rho $ should be enlarged to improve the accuracy of the approximation.
To facilitate the descriptions, the stepwise procedures of the interior point method based iteration algorithm are listed as Algorithm 2 in Fig. \ref{fig6}.

\begin{figure}[!h]
\hrulefill\\
\small
\textbf{Algorithm 2} (Interior point method based iteration algorithm)\vspace*{-5pt}\\
\vspace*{0pt}\hrulefill \\
\textbf{Step 1):} Given a feasible initial ${w_{m,i}},\forall m,i$. Initialize $\rho $ and ${\rho _{{\rm{stop}}}}$.\\
\textbf{Step 2):} Solve the problem by using the interior point method.\\
\textbf{Step 3):} \textbf{While} $\rho  < {\rho _{{\rm{stop}}}}$  \textbf{do}\\
                   \indent\qquad\qquad\qquad Let $\rho  \leftarrow 2\rho $;\\
                 \indent\qquad \qquad\qquad  Solve the problem by using the interior point method;\\
                \indent\qquad\qquad    \textbf{EndWhile}\\
\textbf{Step 4):} Output the derived ${w_{m,i}},\forall m,i$. \\
\vspace*{2pt} \hrulefill
\caption{Interior point method based iteration algorithm.}
\label{fig6}
\end{figure}

Furthermore, the computational complexity of the proposed algorithm is analyzed.
As is known, the inner-point method is a computational efficient algorithm \cite{BIB20}.
Assume that the complexity of the inner-point method for each iteration is $O(T)$.
The number of iterations for updating $\rho$ is upper bounded by $\left\lceil {{{\log }_2}({{{\rho _{{\rm{stop}}}}} \mathord{\left/
 {\vphantom {{{\rho _{{\rm{stop}}}}} \rho }} \right.
 \kern-\nulldelimiterspace} \rho })} \right\rceil {\rm{ + 1}}$, where $\left\lceil  \cdot  \right\rceil $ denotes the ceiling function.
Therefore, the total complexity of the proposed algorithm is on the order of $O(T\left\lceil {{{\log }_2}({{{\rho _{{\rm{stop}}}}} \mathord{\left/
 {\vphantom {{{\rho _{{\rm{stop}}}}} \rho }} \right.
 \kern-\nulldelimiterspace} \rho })} \right\rceil )$, which can be solved within polynomial time.

\section{Numerical Results}
\label{section6}
In this section, an indoor VLC system with multiple LEDs and one PD is considered,
which is deployed in a $5{\rm m} \times 4{\rm m} \times 3{\rm m}$ room.
The LEDs are installed on the ceiling of the room, while the PD is placed on a receiver plane with a height 0.8 m.
The other simulation parameters are listed in Table \ref{tab1}.

\begin{table}[!h]
\caption{Main simulation parameters.}
\label{tab1}
\centering
\begin{tabular}{|c|c|c|}
\hline\hline
\textbf{Parameter} & \textbf{Symbol} &\textbf{Value} \\
\hline
Physical area of the PD & $E$ & $1 {\rm cm}^2$ \\
\hline
Semi-angle at half-power of the LED & $\Phi_{1/2}$ & $35^0$ \\
\hline
Noise variance & $\sigma^2$ & $-104$ dBm \\
\hline
FOV of the PD & $\Psi _{\rm{c}}$ & $72^0$ \\
\hline\hline
\end{tabular}
\end{table}

\subsection{BER Results}
\label{section6_1}
In this subsection, the BER performance of the proposed CABM scheme will be shown.
To facilitate the comparison, the MV-IGCH scheme in \cite{BIB15} is also shown as a benchmark.

\begin{figure}
\centering
\includegraphics[width=8.5cm]{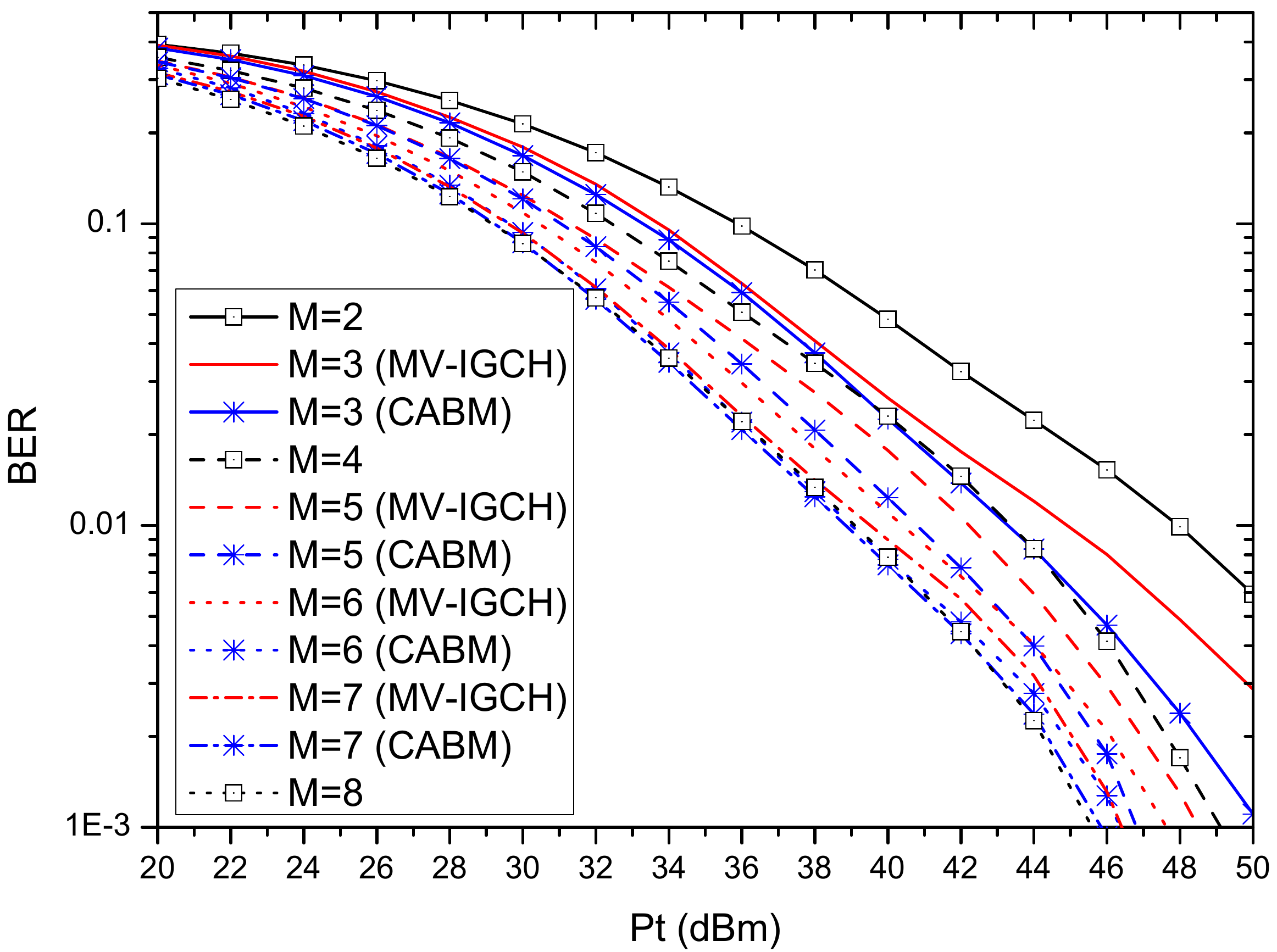}
\caption{BER comparisons between the proposed CABM scheme and the MV-IGCH scheme when $M$ varies from 2 to 8 and $K=5$ bits.}
\label{fig7}
\end{figure}

Fig. \ref{fig7} shows BER comparisons between the proposed CABM scheme and the MV-IGCH scheme
when $M$ varies from 2 to 8 and $K=5$ bits.
In this simulation, when $M=2$, $4$ and $8$, the conventional bit mapping scheme is utilized,
and 16-ary PAM and 8-ary PAM and $4$-ary PAM are used, respectively.
When $M=3$, 16-ary PAM and 8-ary PAM are randomly switched for signal domain symbols.
When $M = 5$, $6$ and $7$, 8-ary PAM and 4-ary PAM are randomly employed for each case.
Obviously, the values of BER for all curves decrease rapidly with the increase of $P_t$.
Moreover, generally the BER performance improves with the increase of $M$.
For large $M$, more information can be transmitted in space domain,
and lower-order constellations can be used in signal domain to reach the same performance as higher-order constellations used with a small $M$.
This conclusion coincides with that in \cite{BIB15}.
Moreover, the performance of the proposed CABM scheme always outperforms that of the MV-IGCH scheme, which verifies
the efficiency of the proposed scheme.

The total bit number of source data (i.e., $K$ bits) can be regarded as the transmission rate per timeslot.
To demonstrate the transmission rate on system performance,
Fig. \ref{fig8} shows BER comparisons between the proposed CABM scheme and the MV-IGCH scheme when $K=5, 6, 7$ bits and $M=5$.
As is observed, the BER performance improves with $P_t$, which is consist with that in Fig. \ref{fig7}.
Moreover, for a fixed number of LEDs, the BER performance degrades with the increase of transmission rate $K$.
This indicates there is a tradeoff between efficiency and reliability for information transmission.
Once again, it can be seen that the performance of the proposed CABM scheme is always better than that of the MV-IGCH scheme.
Specifically, when BER is $10^{-2}$, the proposed CABM scheme gives about 1 dB performance gain compared to the MV-IGCH scheme.
Moreover, the smaller the BER is, the larger the performance gain becomes.
This validates the advantage of the channel adaptive scheme.

\begin{figure}
\centering
\includegraphics[width=8.5cm]{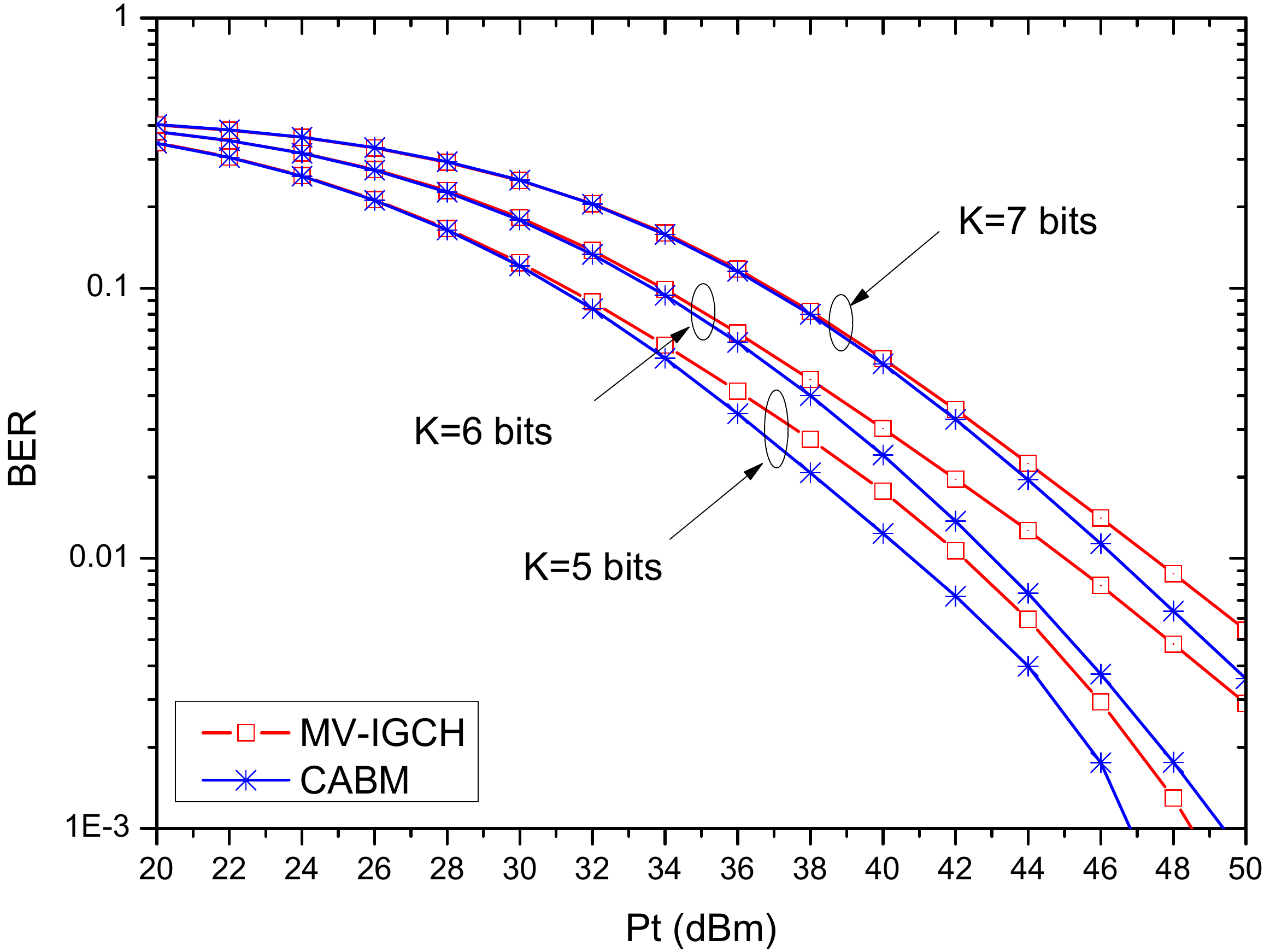}
\caption{BER comparisons between the proposed CABM scheme and the MV-IGCH scheme when $K=5, 6, 7$ bits and $M=5$.}
\label{fig8}
\end{figure}

To further demonstrate the effect of input-dependent noise on system performance,
Fig. \ref{fig9} shows BER comparisons between the proposed CABM scheme and the MV-IGCH scheme when $\varsigma = 0, 50, 100$, $M=3$ and $K=5$ bits.
Similar to Figs. \ref{fig7} and \ref{fig8}, the BER performance improves with $P_t$.
Moreover, in the low SNR regime, the difference among curves is not apparent.
However, at high SNR, the performance gap becomes large among different curves.
It can be seen that the BER performance degrades with the increase of $\varsigma$.
This indicates that the input-dependent noise has a strong impact on system performance at high SNR.
Furthermore, compared with the MV-IGCH scheme, the proposed CABM scheme can always achieve the better performance.

\begin{figure}
\centering
\includegraphics[width=8.5cm]{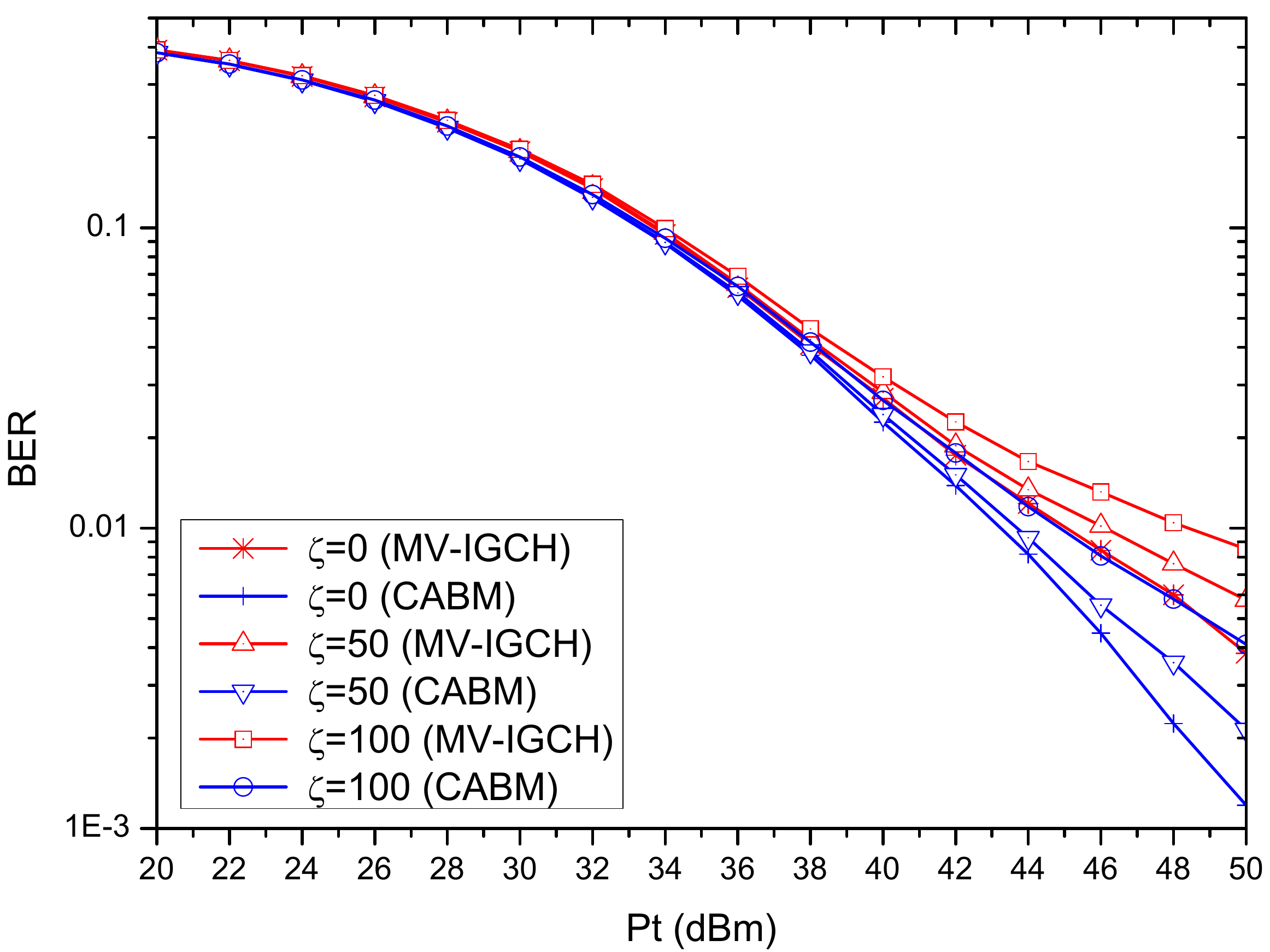}
\caption{BER comparisons between the proposed CABM scheme and the MV-IGCH scheme when $\varsigma = 0, 50, 100$, $M=3$ and $K=5$ bits.}
\label{fig9}
\end{figure}

\begin{figure}
\centering
\includegraphics[width=8.5cm]{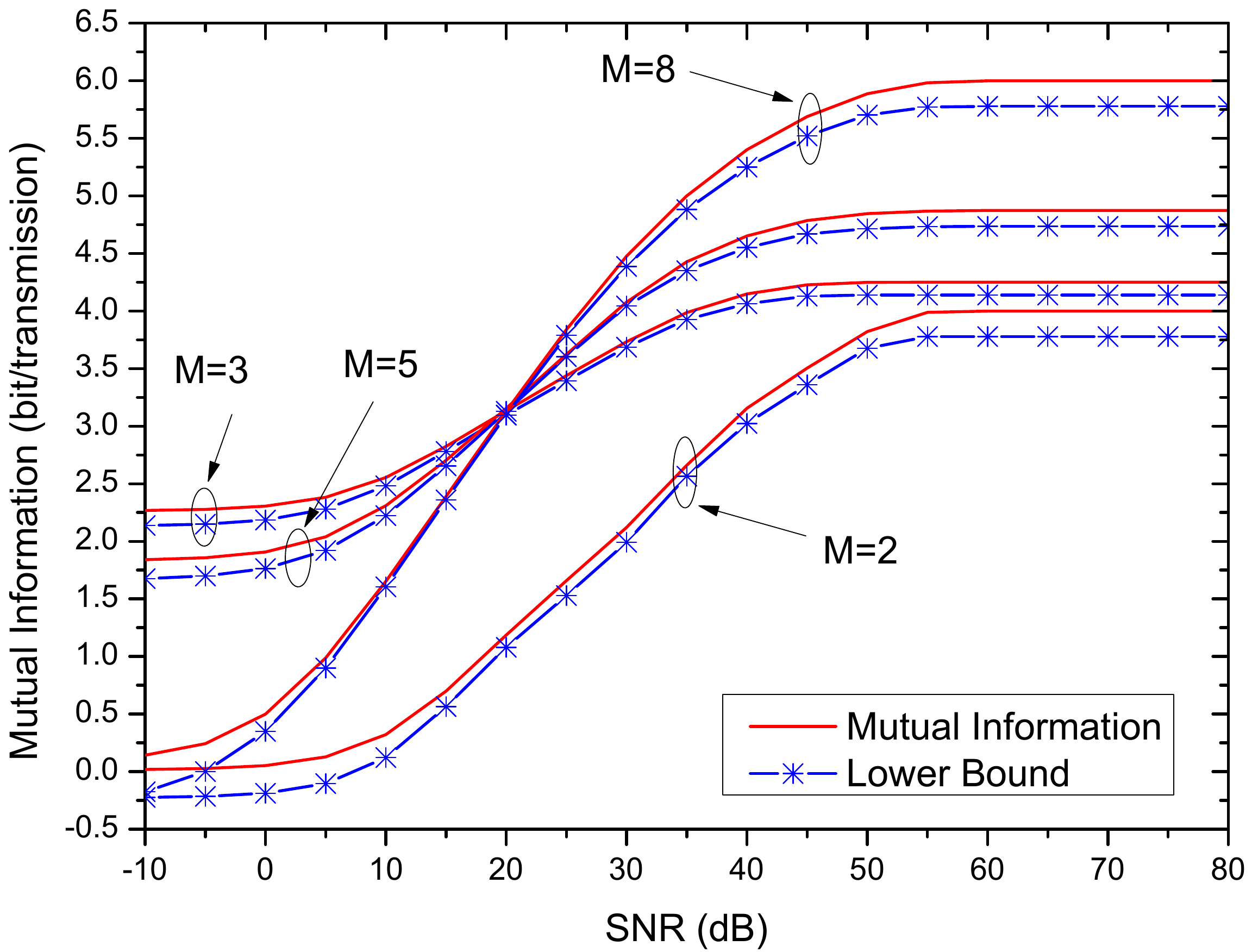}
\caption{Mutual information versus SNR for different $M$ when $q=3$ bits.}
\label{fig10}
\end{figure}

\subsection{Mutual Information Results}
\label{section6_2}
In this subsection, the performance of the mutual information and its lower bound for the SM based VLC system will be shown.
Also, the results for the systems with precoding and without precoding will also be presented.

Fig. \ref{fig10} shows the mutual information versus SNR for different $M$ when $q=3$ bits.
Note that when $M \neq 2^p$ and $M=2^p$, the mutual information results are derived by using (\ref{eq23}) and (\ref{eq27}), respectively;
and the lower bounds of the mutual information are derived by using (\ref{eq34}) and (\ref{eq40}), respectively.
Obviously, the mutual information increases with the increase of SNR.
The gap between the mutual information and its lower bound is very small in the moderate SNR regime.
In the low and high SNR regimes, small constant gaps between the mutual information and its lower bound are shown,
which coincides with that in (\ref{eq37}) and (\ref{eq38}).
Moreover, in low SNR regime, the systems with small $M=3$ and 5 achieve bigger mutual information than that with large $M=2$ and 8.
However, in the high SNR regime, the mutual information increases with the increase of $M$.

\begin{figure}
\centering
\includegraphics[width=8.5cm]{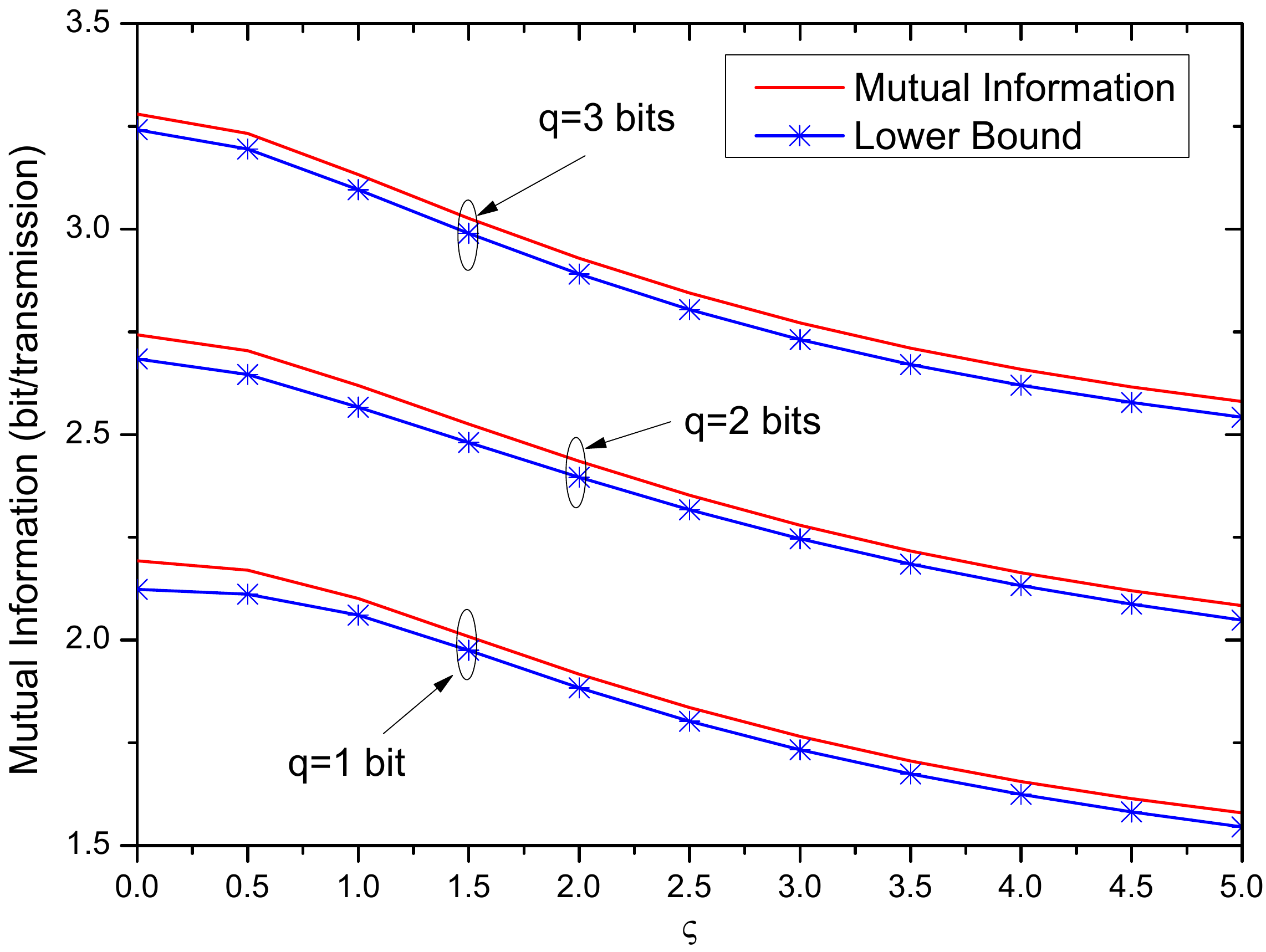}
\caption{Mutual information versus $\varsigma$ for $q=1,2,3$ bits when $M=3$ and SNR=20 dB.}
\label{fig11}
\end{figure}

Fig. \ref{fig11} shows the mutual information versus $\varsigma$ for $q=1,2,3$ bits when $M=3$ and SNR=20 dB.
It can be seen that the mutual information and its lower bound decrease with the increase of $\varsigma$. When $\varsigma=0$, only the input-independent noise is considered, the maximum mutual information is achieved. This indicates that the variance of the input-dependent noise has an important impact on system performance.
Moreover, with the increase of $q$, the data bits transmitted in each time slot become larger, and thus the mutual information also increases.

\begin{figure}
\centering
\includegraphics[width=8.5cm]{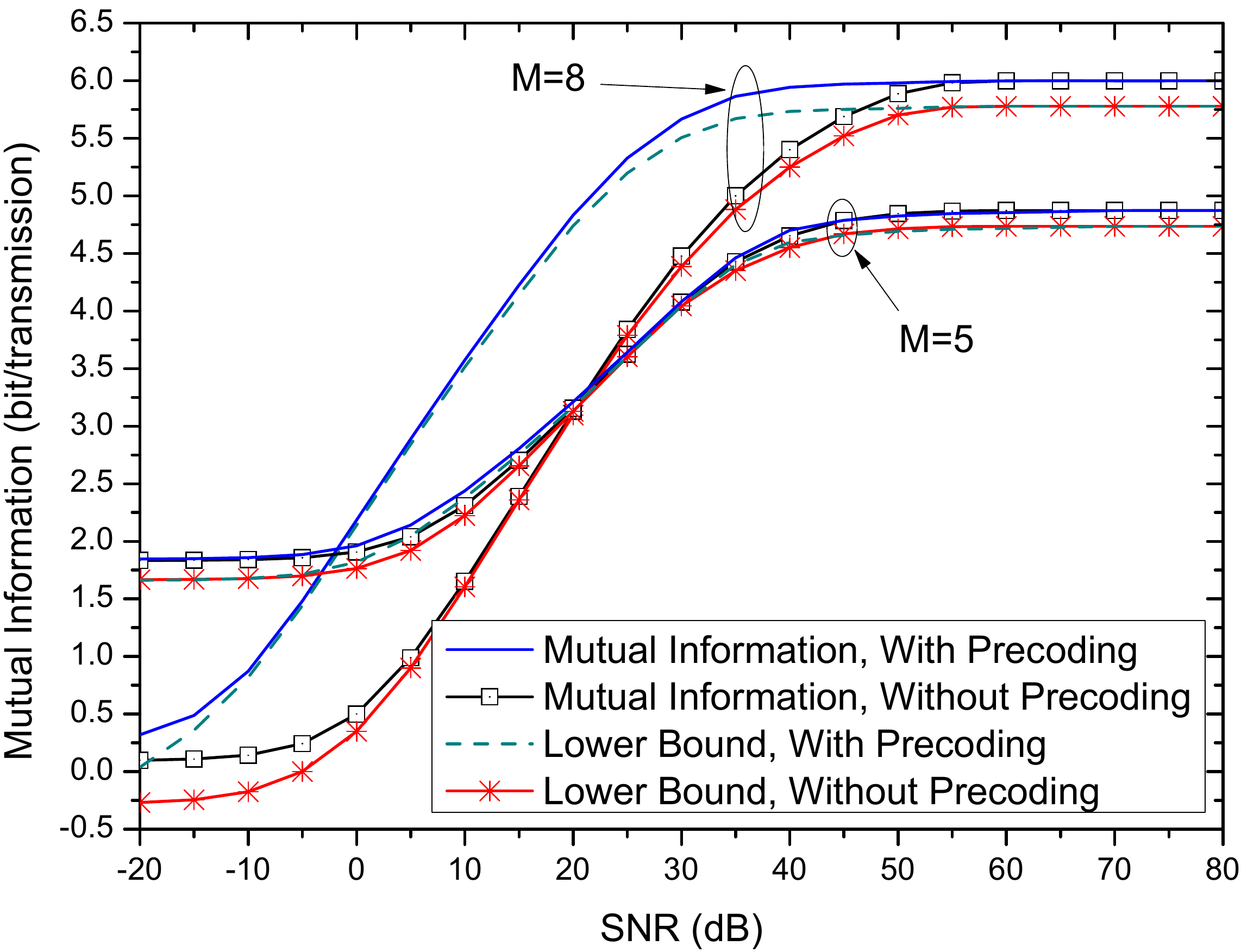}
\caption{Mutual information for schemes with and without precoding when $M=5, 8$ and $q=3$ bits.}
\label{fig12}
\end{figure}

To show the efficiency of the precoding scheme,
Fig. \ref{fig12} shows the mutual information for schemes with and without precoding when $M=5, 8$ and $q=3$ bits.
As can be observed, in the low SNR regime, the system with small $M$ achieves bigger mutual information than that with large $M$.
Moreover, with the increase of SNR, the gap between the systems with small $M$ and with large $M$ becomes smaller and smaller.
In the high SNR regime, the system with large $M$ achieves better performance than that with small $M$.
Furthermore, the system with precoding scheme always achieves better performance than that without using precoding scheme.
This indicates that the proposed precoding scheme is efficient.

\section{Conclusions}
\label{section7}
This paper has investigated the CABM scheme for VLC with an arbitrary number of LEDs.
The main conclusions are summarized as follows:
\begin{enumerate}
  \item A CABM scheme is proposed, which utilizes the CSIT and allows an arbitrary number of LEDs. In traditional bit mapping scheme, the design flexibility is limited by the fact that the number of transmitters must be a power of two. The proposed scheme breaks the limitation on the required number of transmitters. Numerical results show that the proposed CABM scheme outperforms the existing MV-IGCH scheme.
  \item Based on the CABM scheme, the theoretical expression of mutual information for VLC is derived.
        However, it is with an integral expression, which is very hard to evaluate the system performance.
        Alternatively, a lower bound of the mutual information is derived.
        Some insights on the system performance are provided,
        such as the system performance in the high/low SNR regimes, the system performance with $2^p$ LEDs,
        and the gap between mutual information and its lower bound.
  \item A precoding scheme is proposed to maximize the minimum distance between any two constellation points in the received signal space. The optimization problem is transformed by using an approximation and then solved
by using the interior point method. As shown in numerical results, the performance of mutual information is improved by using the precoding scheme.
\end{enumerate}


\appendices
\section{Proof of \emph{Theorem \ref{the1}}}
\label{appa}
By using (\ref{eq7}), (\ref{eq12}), (\ref{eq13}) and (\ref{eq14}), ${\cal I}(h;y|x)$ in (\ref{eq16}) can be derived as
\begin{eqnarray}
&&\!\!\!\!\!\!\!\!\!\!\!\!\!\!\! {\cal I}\left( {h;y\left| x \right.} \right) = \frac{{{{( {M \!\!-\!\! {2^p}})}^2} \!+\! {{( {{2^{p + 1}} \!\!-\!\! M} )}^2}}}{{{2^{2p}}}}(p \!+\! 1)\nonumber\\
&&\!\!\!\!\!\!\!\!\!\!\!\!\!\!\!+ \frac{{M \!-\! {2^p}}}{{{2^{2p + q}}}}\!\!\! \sum\limits_{m \in \mathbf{\Psi}  \cup \mathbf{\Phi }}\! {\sum\limits_{{x_i} \in \mathbf{A}} \!\!{\mathbb{E}{_z}\!\!\left[\! {{{\log }_2} \frac{{\frac{{\exp \left[ { - \frac{{{{z}^2}}}{{2(1 + {h_m}{x_i}{\varsigma ^2}){\sigma ^2}}}} \right]}}{{\sqrt { 1 + {h_m}{x_i}{\varsigma ^2}} }}}}{{\sum\limits_{{m_2} \in \mathbf{\Psi}  \cup \mathbf{\Phi} } \!\!\!\!{\frac{{\exp \!\left[\! { - \frac{{{{\left( {z+d_m^{m_2}} \right)}^2}}}{{2\left( {1 + {h_{{m_2}}}{x_i}{\varsigma ^2}} \right){\sigma ^2}}}}\! \right]}}{{\sqrt { {1 + {h_{{m_2}}}{x_i}{\varsigma ^2}}} }}} }} }\!\! \right]} } \nonumber \\
 &&\!\!\!\!\!\!\!\!\!\!\!\!\!\!\!+ \frac{{{2^{p + 1}} \!\!-\!\! M}}{{{2^{2p + q}}}}\!\!\!\sum\limits_{m \in \mathbf{\Xi} }\! {\sum\limits_{{x_i} \in \mathbf{B}} \!\! {\mathbb{E}{_z}\!\!\left[\! {{{\log }_2} \frac{{\frac{{\exp \left( { - \frac{{{z^2}}}{{2(1 + {h_m}{x_i}{\varsigma ^2}){\sigma ^2}}}} \right)}}{{\sqrt {1 + {h_m}{x_i}{\varsigma ^2}} }}}}{{\sum\limits_{{m_2} \in \mathbf{\Xi} }\!\! {2\frac{{\exp\! \left[\! { - \frac{{{{\left( z+d_m^{m_2} \right)}^2}}}{{2\left( {1 + {h_{{m_2}}}{x_i}{\varsigma ^2}} \right){\sigma ^2}}}} \!\right]}}{{\sqrt { {1 + {h_{{m_2}}}{x_i}{\varsigma ^2}} } }}} }} }\!\! \right]} }.
\label{eq18}
\end{eqnarray}
where $z = y - {h_m}{x_i}$ and $d_m^{{m_2}} = {h_m}{x_i} - {h_{{m_2}}}{x_i}$.

Furthermore, ${\cal I}\left( {x;y} \right)$ in (\ref{eq16}) can be derived as
\begin{eqnarray}
\!\!\!\!\!\!\!\!\!\!{\cal I}\left( {x;y} \right) \!\!\!\!\!&=&\!\!\!\!\! {\cal H}\left( x \right) - {\cal H}\left( {x\left| y \right.} \right) \nonumber \\
&=&\!\!\!\!\! \frac{{M \!\!-\!\! {2^p}}}{{{2^p}}}{\log _2}\frac{{{2^{p + q - 1}}}}{{M \!\!-\!\! {2^p}}} \!+\! \frac{{{2^{p + 1}} \!\!-\!\! M}}{{{2^p}}}{\log _2}\frac{{{2^{p + q}}}}{{{2^{p + 1}} \!\!-\!\! M}}\nonumber\\
&-&\!\!\!\!\! \frac{{M \!\!-\!\! {2^p}}}{{{2^{2p + q}}}}\!\!\sum\limits_{{x_i} \in {\bf{A}}} \!{\sum\limits_{m \in \mathbf{\Psi}  \cup \mathbf{\Phi}}\!\!\! I_1 } \!-\! \frac{{{2^{p + 1}} \!\!-\!\! M}}{{{2^{2p + q}}}}\!\!\sum\limits_{{x_i} \in {\bf{B}}}\! {\sum\limits_{m \in \mathbf{\Xi} } \!\!I_2 },
\label{eq19}
\end{eqnarray}
where $I_1$ and $I_2$ are given by (\ref{eq19_0}) and (\ref{eq19_1}) as shown at the top of the next page.
\begin{table*}\normalsize
\begin{eqnarray}
I_1\!\!\!\!\!\!&=&\!\!\!\!\!\!\!\!\int_{ - \infty }^{ + \infty }\!\! \frac{{\exp\! \left[\! { - \frac{{{{\left( {y - {h_m}{x_i}} \right)}^2}}}{{2\left( {1 + {h_m}{x_i}{\varsigma ^2}} \right){\sigma ^2}}}} \!\right]}}{{\sqrt {2\pi \left( {1 \!+\! {h_m}{x_i}{\varsigma ^2}} \right){\sigma ^2}} }} {{\log }_2}\frac{{\sum\limits_{{x_{{i_{_2}}}} \in {\bf{A}}}{\sum\limits_{{m_2} \in \mathbf{\Psi}  \cup \mathbf{\Phi} } \!\! {\frac{{\exp \left[\! { - \frac{{{{\left( {y - {h_{{m_2}}}{x_{{i_2}}}} \right)}^2}}}{{2\left( {1 + {h_{{m_2}}}{x_{{i_2}}}{\varsigma ^2}} \right){\sigma ^2}}}}\! \right]}}{{\sqrt {2\pi \left( {1 + {h_{{m_2}}}{x_{{i_2}}}{\varsigma ^2}} \right){\sigma ^2}} }}} }  \!\!+\!\!  \frac{{2^{p + 1}} \!-\! M}{{M - {2^p}}}\!\! \sum\limits_{{x_{{i_{_2}}}} \in {\bf{B}}}\! {\sum\limits_{{m_2} \in \mathbf{\Xi} } \!\!{\frac{{\exp \!\left[ { - \frac{{{{\left( {y - {h_{{m_2}}}{x_{{i_2}}}} \right)}^2}}}{{2\left( {1 + {h_{{m_2}}}{x_{{i_2}}}{\varsigma ^2}} \right){\sigma ^2}}}} \right]}}{{\sqrt {2\pi \left( {1 + {h_{{m_2}}}{x_{{i_2}}}{\varsigma ^2}} \right){\sigma ^2}} }}} } }}{{\sum\limits_{{m_2} \in \mathbf{\Psi}  \cup \mathbf{\Phi} } {\frac{{\exp \left[ { - \frac{{{{\left( {y - {h_{{m_2}}}{x_i}} \right)}^2}}}{{2\left( {1 + {h_{{m_2}}}{x_i}{\varsigma ^2}} \right){\sigma ^2}}}} \right]}}{{\sqrt {2\pi \left( {1 + {h_{{m_2}}}{x_i}{\varsigma ^2}} \right){\sigma ^2}} }}} }}{\rm{d}}y.
\label{eq19_0}
\end{eqnarray}
\hrulefill
\end{table*}

\begin{table*}\normalsize
\begin{eqnarray}
I_2 \!\!=\!\!\!\! \int_{ - \infty }^{ + \infty }\!\! \frac{{\exp\! \left[\! { - \frac{{{{\left( {y - {h_m}{x_i}} \right)}^2}}}{{2\left( {1 + {h_m}{x_i}{\varsigma ^2}} \right){\sigma ^2}}}}\! \right]}}{{\sqrt {2\pi \left( {1 \!+\! {h_m}{x_i}{\varsigma ^2}} \right){\sigma ^2}} }} \log_2\!\frac{{ \frac{M - {2^p}}{{2^{p + 1}} \!-\! M} \!\!\sum\limits_{{x_{{i_{_2}}}} \in {\bf{A}}}\! {\sum\limits_{{m_2} \in \mathbf{\Psi}  \cup \mathbf{\Phi} }\!\!\! {\frac{{\exp \!\left[\! { - \frac{{{{\left( {y - {h_{{m_2}}}{x_{{i_2}}}} \right)}^2}}}{{2\left( {1 + {h_{{m_2}}}{x_{{i_2}}}{\varsigma ^2}} \right){\sigma ^2}}}}\! \right]}}{{\sqrt {2\pi \left( {1 + {h_{{m_2}}}{x_{{i_2}}}{\varsigma ^2}} \right){\sigma ^2}} }}} }  \!\!+\!\! \sum\limits_{{x_{{i_{_2}}}} \in {\bf{B}}} {\sum\limits_{{m_2} \in \mathbf{\Xi} } {\frac{{\exp \left[ { - \frac{{{{\left( {y - {h_{{m_2}}}{x_{{i_2}}}} \right)}^2}}}{{2\left( {1 + {h_{{m_2}}}{x_{{i_2}}}{\varsigma ^2}} \right){\sigma ^2}}}} \right]}}{{\sqrt {2\pi \left( {1 + {h_{{m_2}}}{x_{{i_2}}}{\varsigma ^2}} \right){\sigma ^2}} }}} } }}{{\sum\limits_{{m_2} \in \mathbf{\Xi} } {\frac{{\exp \left[ { - \frac{{{{\left( {y - {h_{{m_2}}}{x_i}} \right)}^2}}}{{2\left( {1 + {h_{{m_2}}}{x_i}{\varsigma ^2}} \right){\sigma ^2}}}} \right]}}{{\sqrt {2\pi \left( {1 + {h_{{m_2}}}{x_i}{\varsigma ^2}} \right){\sigma ^2}} }}} }}{\rm{d}}y.
\label{eq19_1}
\end{eqnarray}
\hrulefill
\end{table*}

Let $d_{m,i}^{{m_2},{i_2}} = {h_m}{x_i} - {h_{{m_2}}}{x_{{i_2}}}$ and $d_m^{{m_2}} = {h_m}{x_i} - {h_{{m_2}}}{x_i}$, $I_1$ and $I_2$ can be further written, respectively, as
\begin{eqnarray}
&&\!\!\!\!\!\!\!\!\!\!\!\!\!\!\!\!\! {I_1} \!\!=\!\! \mathbb{E}{_z}\!\!\left\{\!\! {{{\log }_2}\!\!\left\{\!\! {\left[\! {\left( {M \!\!-\!\! {2^p}} \right)\!\!\!\sum\limits_{{x_{{i_2}}} \in {\bf{A}}}\! {\sum\limits_{{m_2} \in \bf{\Psi}  \cup \bf{\Phi} }\!\!\!\! {\exp \!\!\left[\!\! {  \frac{{{{-\left( {z + d_{m,i}^{{m_2},{i_2}}} \right)}^2}}}{{2\!\left( {1 \!\!+\!\! {h_{{m_2}}}{x_{{i_2}}}{\varsigma ^2}} \right)\!{\sigma ^2}}}}\!\! \right]} } } \right.} \right.} \right. \nonumber \\
&&\!\!\!\!\!\!\!\!\!\!\!\!\!\!\!\!\! {{\left. { +\!\! \left( {{2^{p + 1}} \!\!-\!\! M} \right)\!\!\!\sum\limits_{{x_{{i_2}}} \in {\bf{B}}} {\sum\limits_{{m_2} \in \bf{\Xi} }\! {\exp \!\!\left[\! {  \frac{{{{-\left( {z + d_{m,i}^{{m_2},{i_2}}} \right)}^2}}}{{2\!\left( {1 \!+\! {h_{{m_2}}}{x_{{i_2}}}{\varsigma ^2}} \right)\!{\sigma ^2}}}} \!\right]}\! } } \right]} \mathord{\left/
 {\vphantom {{\left. {\;\;\;\;\; + \left( {{2^{p + 1}} - M} \right)\sum\limits_{{x_{{i_2}}} \in {\bf{B}}} {\sum\limits_{{m_2} \in \Xi } {\exp \left( { - \frac{{{{\left( {z + d_{m,i}^{{m_2},{i_2}}} \right)}^2}}}{{2\left( {1 + {h_{{m_2}}}{x_{{i_2}}}{\varsigma ^2}} \right){\sigma ^2}}}} \right)} } } \right]} {}}} \right.
 \kern-\nulldelimiterspace} {}}\nonumber \\
&&\!\!\!\!\!\!\!\!\!\!\!\!\!\!\!\!\! \left. {\left. {\left[\!\! {(\! {M \!\!-\!\! {2^p}}\!)\!\!\frac{{\sqrt {1 \!\!+\!\! {h_{{m_2}}}{x_{{i_2}}}{\varsigma ^2}} }}{{\sqrt {1 \!\!+\!\! {h_{{m_2}}}{x_i}{\varsigma ^2}} }}\!\!\!\sum\limits_{{m_2} \in \bf{\Psi}  \cup \bf{\Phi} }\!\!\!\! {\exp\!\! \left[\! { \frac{{{{-\left( {z + d_m^{{m_2}}} \right)}^2}}}{{2\left( {1 + {h_{{m_2}}}{x_i}{\varsigma ^2}} \right){\sigma ^2}}}} \!\!\right]} }\!\! \right]} \!\!\right\}}\!\! \right\}\!\!,
\label{eq20}
\end{eqnarray}
and
\begin{eqnarray}
&&\!\!\!\!\!\!\!\!\!\!\!\!\!\!\!{I_2} \!\!=\!\! { \mathbb{E}_z}\!\!\left\{\!\! {{{\log }_2}\!\!\left\{\!\! {\left[\!\! {\left(\! {M \!-\! {2^p}}\! \right)\!\!\!\sum\limits_{{x_{{i_2}}} \in {\bf{A}}}\! {\sum\limits_{{m_2} \in {\bf{\Psi }} \cup {\bf{\Phi }}}\!\!\! {\exp\!\! \left[\! { \frac{{{{-\left( {z + d_{m,i}^{{m_2},{i_2}}} \right)}^2}}}{{2\!\left( {1 \!+\! {h_{{m_2}}}{x_{{i_2}}}{\varsigma ^2}} \right)\!{\sigma ^2}}}}\!\! \right]} } } \right.} \right.} \right. \nonumber \\
&&\!\!\!\!\!\!\!\!\!\!\!\!\!\!\!{{\left. { +\!\! \left( {{2^{p + 1}} \!\!-\!\! M} \right)\!\!\sum\limits_{{x_{{i_2}}} \in {\bf{B}}} \!{\sum\limits_{{m_2} \in {\bf{\Xi }}} \!\!{\exp\!\! \left[\! {  \frac{{{{-\left( {z + d_{m,i}^{{m_2},{i_2}}} \right)}^2}}}{{2\!\left( {1 \!+\! {h_{{m_2}}}{x_{{i_2}}}{\varsigma ^2}} \right)\!{\sigma ^2}}}}\! \right]} } } \right]} \mathord{\left/
 {\vphantom {{\left. { + \left( {{2^{p + 1}} - M} \right)\sum\limits_{{x_{{i_2}}} \in {\bf{B}}} {\sum\limits_{{m_2} \in {\bf{\Xi }}} {\exp \left( { - \frac{{{{\left( {z + d_{m,i}^{{m_2},{i_2}}} \right)}^2}}}{{2\left( {1 + {h_{{m_2}}}{x_{{i_2}}}{\varsigma ^2}} \right){\sigma ^2}}}} \right)} } } \right]} {}}} \right.
 \kern-\nulldelimiterspace} {}}\nonumber \\
&&\!\!\!\!\!\!\!\!\!\!\!\!\!\!\!\left. {\left. {\left[\!\! {\left( {{2^{p + 1}} \!\!-\!\! M} \right)\!\!\frac{{\sqrt {\!1 \!\!+\!\! {h_{{m_2}}}{x_{{i_2}}}{\varsigma ^2}} }}{{\sqrt {\!1 \!\!+\!\! {h_{{m_2}}}{x_i}{\varsigma ^2}} }}\!\!\!\sum\limits_{{m_2} \in {\bf{\Xi }}}\!\! {\exp\!\! \left[\! {  \frac{{{{-\left( {z + d_m^{{m_2}}} \right)}^2}}}{{2\!\left( {1 \!+\! {h_{{m_2}}}{x_i}{\varsigma ^2}} \right)\!{\sigma ^2}}}} \!\! \right]} }\!\! \right]} \!\!\right\}}\!\! \right\}\!\!.
\label{eq21}
\end{eqnarray}
Substitute (\ref{eq20}) and (\ref{eq21}) into (\ref{eq19}), ${\cal I}\left( {x;y} \right)$ can be finally written as (\ref{eq22}) as shown at the top of the next page.
\begin{table*}\normalsize
\begin{eqnarray}
&&\!\!\!\!\!\!\!\!\!\!{\cal I}\left( {x;y} \right) = \frac{{M - {2^p}}}{{{2^p}}}{\log _2}\frac{{{2^{p + q - 1}}}}{{M - {2^p}}} + \frac{{{2^{p + 1}} - M}}{{{2^p}}}{\log _2}\frac{{{2^{p + q}}}}{{{2^{p + 1}} - M}}\nonumber\\
&&\!\!\!\!\!\!\!\!\!\!- \frac{{M \!-\! {2^p}}}{{{2^{2p + q}}}}\!\!\sum\limits_{{x_i} \in {\bf{A}}}\! \sum\limits_{m \in\mathbf{ \Psi}  \cup \mathbf{\Phi }}\!\!\!\mathbb{E}{_z}\!\!\left[\! {{{\log }_2}\frac{{\sum\limits_{{x_{{i_2}}} \in {\bf{A}}}\! {\sum\limits_{{m_2} \in \mathbf{\Psi } \cup \mathbf{\Phi} }\!\!\!\! \frac{\exp \!\left[\!\! { - \frac{{{{\left( {z + d_{m,i}^{{m_2},{i_2}}} \right)}^2}}}{{2\left( {1 + {h_{{m_2}}}{x_{{i_2}}}{\varsigma ^2}} \right){\sigma ^2}}}} \!\!\right]}{{\sqrt {1 + {h_{{m_2}}}{x_{{i_2}}}{\varsigma ^2}} }} } \!+\! \frac{{{2^{p + 1}} \!-\!M}}{ {M - {2^p}}}\!\!\!\! \sum\limits_{{x_{{i_2}}} \in {\bf{B}}}\!{\sum\limits_{{m_2} \in \mathbf{\Xi} }\!\! \frac{\exp \!\left[\! { - \frac{{{{\left( {z + d_{m,i}^{{m_2},{i_2}}} \right)}^2}}}{{2\left( {1 + {h_{{m_2}}}{x_{{i_2}}}{\varsigma ^2}} \right){\sigma ^2}}}} \!\!\right]}{{\sqrt {1 + {h_{{m_2}}}{x_{{i_2}}}{\varsigma ^2}} }} } }}{{\sum\limits_{{m_2} \in \mathbf{\Psi}  \cup \mathbf{\Phi} } \frac{\exp \left[ { - \frac{{{{\left( {z + d_m^{{m_2}}} \right)}^2}}}{{2\left( {1 + {h_{{m_2}}}{x_i}{\varsigma ^2}} \right){\sigma ^2}}}} \right]}{{\sqrt {1 + {h_{{m_2}}}{x_i}{\varsigma ^2}} }} }}}\! \right] \nonumber \\
&&\!\!\!\!\!\!\!\!\!\!- \frac{{{2^{p \!+\! 1}} \!-\! M}}{{{2^{2p + q}}}}\!\!\sum\limits_{{x_i} \in {\bf{B}}} \!\sum\limits_{m \in \mathbf{\Xi} } \!\!\mathbb{E}{_z}\!\!\left[\!  {{{\log }_2}\frac{{ \frac{M - {2^p}}{{{2^{p \!+\! 1}} \!-\! M}} \!\!\! \sum\limits_{{x_{{i_2}}} \in {\bf{A}}} \!{\sum\limits_{{m_2} \in \mathbf{\Psi } \cup \mathbf{\Phi} } \!\!\!\frac{\exp\! \left[\!\! { - \frac{{{{\left( {z + d_{m,i}^{{m_2},{i_2}}} \right)}^2}}}{{2\left(\! {1 \!+\! {h_{{m_2}}}{x_{{i_2}}}{\varsigma ^2}} \!\right){\sigma ^2}}}}\!\! \right]}{{\sqrt {1 + {h_{{m_2}}}{x_{{i_2}}}{\varsigma ^2}} }} } \!+\!\!\! \sum\limits_{{x_{{i_2}}} \in {\bf{B}}} \!{\sum\limits_{{m_2} \in \mathbf{\Xi} }\!\!\!\frac{\exp\! \left[\!\! { - \frac{{{{\left( {z + d_{m,i}^{{m_2},{i_2}}} \right)}^2}}}{{2\left(\! {1 + {h_{{m_2}}}{x_{{i_2}}}{\varsigma ^2}} \! \right){\sigma ^2}}}} \!\!\right]}{{\sqrt {1 + {h_{{m_2}}}{x_{{i_2}}}{\varsigma ^2}} }} } }}{{\sum\limits_{{m_2} \in \mathbf{\Xi} } \frac{\exp \left( { - \frac{{{{\left( {z + d_m^{{m_2}}} \right)}^2}}}{{2\left( {1 + {h_{{m_2}}}{x_i}{\varsigma ^2}} \right){\sigma ^2}}}} \right)}{{\sqrt {1 + {h_{{m_2}}}{x_i}{\varsigma ^2}} }} }}}\!\! \right].
 \label{eq22}
\end{eqnarray}
\hrulefill
\end{table*}
Furthermore, substitute (\ref{eq18}) and (\ref{eq22}) into (\ref{eq16}), (\ref{eq23}) can be derived.

\section{Proof of \emph{Theorem \ref{the2}}}
\label{appb}
To facilitate the analysis, (\ref{eq23}) can be rewritten as
\begin{eqnarray}
&&\!\!\!\!\!\! {\cal I}\left( {x,h;y} \right) \!=\! \frac{{{{\left( {M \!-\! {2^p}} \right)}^2} \!+\! {{\left( {{2^{p + 1}} \!-\! M} \right)}^2}}}{{{2^{2p}}}}\left( {p \!+\! 1} \right) \nonumber\\
&&\!\!\!\!\!\!+ \frac{{M \!-\! {2^p}}}{{{2^p}}}{\log _2}\frac{{{2^{p + q - 1}}}}{{M \!-\! {2^p}}} \!+\! \frac{{{2^{p + 1}} \!-\! M}}{{{2^p}}}{\log _2}\frac{{{2^{p + q}}}}{{{2^{p + 1}} \!-\! M}} \nonumber \\
&&\!\!\!\!\!\! - \frac{{M \!-\! {2^p}}}{{{2^{2p + q}}}}\!\!\!\sum\limits_{m \in \mathbf{\Psi}  \cup \mathbf{\Phi }} {\sum\limits_{{x_i} \in {\bf{A}}} {\underbrace {{\mathbb{E}_z}\!\left[ {{{\log }_2}\!\!\left[ {\exp\!\! \left(\! {  \frac{{{z^2}}}{{2(1 \!+\! {h_m}{x_i}{\varsigma ^2}){\sigma ^2}}}}\! \right)} \right]} \right]}_{{I_3}}} }  \nonumber\\
&&\!\!\!\!\!\!- \frac{{{2^{p + 1}} \!-\! M}}{{{2^{2p + q}}}}\!\!\sum\limits_{m \in \mathbf{\Xi} } {\sum\limits_{{x_i} \in {\bf{B}}} \!\!{{\mathbb{E}_z}\!\!\left[\! {{{\log }_2}\!\!\left[ {\exp\!\! \left(\! {  \frac{{{z^2}}}{{2(1 \!+\! {h_m}{x_i}{\varsigma ^2}){\sigma ^2}}}}\! \right)}\! \right]}\! \right]} } \nonumber\\
&&\!\!\!\!\!\! - \frac{{M \!-\! {2^p}}}{{{2^{2p + q}}}}\!\!\sum\limits_{m \in \mathbf{\Psi}  \cup \mathbf{\Phi} }\! \sum\limits_{{x_i} \in {\bf{A}}} I_4 - \frac{{{2^{p + 1}} - M}}{{{2^{2p + q}}}}\sum\limits_{m \in \mathbf{\Xi} } \sum\limits_{{x_i} \in {\bf{B}}} I_5,
 \label{eq30}
\end{eqnarray}
where $I_4$ and $I_5$ are given, respectively, as
\begin{eqnarray}
I_4\!\!\!\!\!\!&=&\!\!\!\!\!\!{\mathbb{E}_z}\!\!\left\{\!\! {{{\log }_2}\!\!\!\left[\!\! {\sqrt {\!1 \!\!+\!\! {h_m}{x_i}{\varsigma ^2}}\!\! \left(\!\! {\sum\limits_{{x_{{i_2}}} \in {\bf{A}}} \!{\sum\limits_{{m_2} \in \mathbf{\Psi}  \cup \mathbf{\Phi} } \!\!\!\!{\frac{{\exp\!\! \left[\! {  \frac{{{{-\left( {z + d_{m,i}^{{m_2},{i_2}}} \right)}^2}}}{{2\left( {1 + {h_{{m_2}}}{x_{{i_2}}}{\varsigma ^2}} \right){\sigma ^2}}}} \!\right]}}{{\sqrt {1 \!+\! {h_{{m_2}}}{x_{{i_2}}}{\varsigma ^2}} }}} } } \right.} \right.} \right.\nonumber\\
&+&\!\!\!\!\!\!\!\left. {\left. {\left. { \frac{{{2^{p + 1}} \!-\! M}}{{M \!-\! {2^p}}}\!\!\!\sum\limits_{{x_{{i_2}}} \in {\bf{B}}} {\sum\limits_{{m_2} \in \mathbf{\Xi} } {\frac{{\exp \!\!\left[\! {  \frac{{{{-\left( {z + d_{m,i}^{{m_2},{i_2}}} \right)}^2}}}{{2\left( {1 + {h_{{m_2}}}{x_{{i_2}}}{\varsigma ^2}} \right){\sigma ^2}}}} \!\right]}}{{\sqrt {1 + {h_{{m_2}}}{x_{{i_2}}}{\varsigma ^2}} }}} } } \right)} \right]} \right\},
\label{eq30_0}
\end{eqnarray}
and
\begin{eqnarray}
{I_5}\!\!\!\!\! &=&\!\!\!\!\! {\mathbb{E}_z}\left\{ {{{\log }_2}\left[ {2\sqrt {1 + {h_m}{x_i}{\varsigma ^2}} \left( {\frac{{M - {2^p}}}{{{2^{p + 1}} - M}} } \right.} \right.} \right.\nonumber\\
&\times&\!\!\!\!\! \sum\limits_{{x_{{i_2}}} \in {\bf{A}}} {\sum\limits_{{m_2} \in \mathbf{\Psi}  \cup \mathbf{\Phi} } {\frac{{\exp \left( { - \frac{{{{\left( {z + d_{m,i}^{{m_2},{i_2}}} \right)}^2}}}{{2\left( {1 + {h_{{m_2}}}{x_{{i_2}}}{\varsigma ^2}} \right){\sigma ^2}}}} \right)}}{{\sqrt {1 + {h_{{m_2}}}{x_{{i_2}}}{\varsigma ^2}} }}} }\nonumber\\
&+& \!\!\!\!\!\left. {\left. {\left. { \sum\limits_{{x_{{i_2}}} \in {\bf{B}}} {\sum\limits_{{m_2} \in \mathbf{\Xi} } {\frac{{\exp \left( { - \frac{{{{\left( {z + d_{m,i}^{{m_2},{i_2}}} \right)}^2}}}{{2\left( {1 + {h_{{m_2}}}{x_{{i_2}}}{\varsigma ^2}} \right){\sigma ^2}}}} \right)}}{{\sqrt {1 + {h_{{m_2}}}{x_{{i_2}}}{\varsigma ^2}} }}} } } \right)} \right]} \right\}.
\label{eq30_1}
\end{eqnarray}

For $I_3$ in (\ref{eq30}), we have
\begin{eqnarray}
{I_3} =  \frac{1}{2}{\log _2}e.
 \label{eq31}
\end{eqnarray}
For $I_4$ in (\ref{eq30_0}), an upper bound can be derived as
\begin{eqnarray}
&&\!\!\!\!\!\!\!\!\!\!\!\!{I_4}\!\!\le\!\! {\log _2}\!\!\!\left[\!\! {\sum\limits_{{x_{{i_2}}} \!\in\! \mathbf{A}}\! {\sum\limits_{{m_2} \!\in\! \mathbf{\Psi}  \cup \mathbf{\Phi} }\!\!\! {\frac{{\sqrt {1 \!\!+\!\! {h_m}{x_i}{\varsigma ^2}} }}{{\sqrt {\!2\!\left( {1 \!\!+\!\! {h_{{m_2}}}{x_{{i_2}}}{\varsigma ^2}} \right)} }}\!\!\exp\!\!\!\left[\!\! {  \frac{{{{-\left(\! {d_{m,i}^{{m_2},{i_2}}}\! \right)}^2}}}{{4\!\left(\! {1 \!\!+\!\! {h_{{m_2}}}\!{x_{{i_2}}}\!{\varsigma ^2}} \! \right)\!{\sigma ^2}}}}\!\! \right]} } } \right.\nonumber\\
&&\!\!\!\!\!\!\!\!\!\!\!\!\left. { +\! \frac{{{2^{p \!+\! 1}} \!\!-\!\! M}}{{M \!\!-\!\! {2^p}}}\!\!\!\!\sum\limits_{{x_{{i_2}}} \!\in\! \mathbf{B}} \! {\sum\limits_{{m_2} \!\in\! \mathbf{\Xi} }\!\! {\frac{{\sqrt {1 \!\!+\!\! {h_m}{x_i}{\varsigma ^2}} }}{{\sqrt {\!2\!\left(\! {1 \!\!+\!\! {h_{{m_2}}}\!{x_{{i_2}}}\!{\varsigma ^2}}\! \right)} }}} \!\exp\!\!\!\left[\!\! {  \frac{{{{-\left( {d_{m,i}^{{m_2},{i_2}}} \right)}^2}}}{{4\!\left(\! {1 \!\!+\!\! {h_{{m_2}}}\!{x_{{i_2}}}\!{\varsigma ^2}}\! \right)\!{\sigma ^2}}}}\!\! \right]} }\!\! \right]\!\!\!.
\label{eq32}
\end{eqnarray}
Similarly, for $I_5$ in (\ref{eq30_1}), an upper bound is given by
\begin{eqnarray}
{I_5} \!\!\!\!\! &\le&\!\!\!\!\! {\log _2}\left[ {\frac{{M - {2^p}}}{{{2^{p + 1}} - M}}\sum\limits_{{x_{{i_2}}} \in {\bf{A}}} {\sum\limits_{{m_2} \in \mathbf{\Psi}  \cup \mathbf{\Phi} } \!\!\!{\frac{{\sqrt {2\left( {1 + {h_m}{x_i}{\varsigma ^2}} \right)} }}{{\sqrt {1 + {h_{{m_2}}}{x_{{i_2}}}{\varsigma ^2}} }}} } } \right.\nonumber\\
&\times&\!\!\!\!\! \exp\left[ { - \frac{{{{\left( {d_{m,i}^{{m_2},{i_2}}} \right)}^2}}}{{4\left( {1 + {h_{{m_2}}}{x_{{i_2}}}{\varsigma ^2}} \right){\sigma ^2}}}} \right]\nonumber\\
&+&\!\!\!\!\!\!\!\!\!\left. {  \sum\limits_{{x_{{i_2}}} \!\in\! {\bf{B}}} {\sum\limits_{{m_2} \!\in\! \mathbf{\Xi }}\!\!\! {\frac{{\sqrt {2\!\left( {1 \!\!+\!\! {h_m}{x_i}{\varsigma ^2}} \right)} }}{{\sqrt {1 \!+\! {h_{{m_2}}}{x_{{i_2}}}{\varsigma ^2}} }}} \!\exp\!\!\!\left[\!\! {  \frac{{{{-\left( {d_{m,i}^{{m_2},{i_2}}} \right)}^2}}}{{4\!\left( {1 \!\!+\!\! {h_{{m_2}}}{x_{{i_2}}}{\varsigma ^2}} \right)\!{\sigma ^2}}}}\!\! \right]} }\! \right].
 \label{eq33}
\end{eqnarray}
Substituting (\ref{eq31})-(\ref{eq33}) into (\ref{eq30}), (\ref{eq34}) is obtained.

\ifCLASSOPTIONcaptionsoff
  \newpage
\fi

\end{document}